\newcommand{\calO}{\mathcal{O}}
\newcommand{\E}{\mathbb{E}}
\newcommand{\Real}{{\mathbbm R}}
\newcommand{\change}[2]{#2}
\definecolor{DarkBlue}{rgb}{0.1,0.1,0.7}
\definecolor{DarkGreen}{rgb}{0.1,0.7,0.1}
 \newcommand{\schange}[2]{{\color{DarkGreen} #2 \color{black}}}
\definecolor{DarkRed}{rgb}{0.7,0.1,0.1}
\newcommand{\tcomment}[1]{}
\newtheorem{theorem}{Theorem}[section]
\newtheorem{lemma}[theorem]{Lemma}
\title{\emph{NextBestOnce}: Achieving Polylog Routing despite Non-greedy Embeddings}
\author{Stefanie Roos, Thorsten Strufe \\
TU Darmstadt, CASED \\
\texttt{firstname.lastname}@cased.de}
\date{}
\begin{document}
\maketitle

Social Overlays suffer from high message delivery delays due to insufficient routing strategies. Limiting connections to device pairs that are owned by individuals with a mutual trust relationship in real life,
they form topologies restricted to a subgraph of the social network of their users. 
While centralized, highly successful social networking services entail a complete privacy loss of their users, Social Overlays at higher performance represent an ideal private and censorship-resistant communication substrate for the same purpose.

Routing in such restricted topologies is facilitated by embedding the social
graph into a metric space. 
Decentralized routing algorithms have up to date mainly been analyzed under the assumption
of a perfect lattice structure. 
However, currently deployed embedding algorithms for privacy-preserving Social Overlays cannot achieve a sufficiently
accurate embedding and hence conventional routing algorithms fail.  
Developing Social Overlays with acceptable performance hence requires better models and enhanced algorithms, which guarantee convergence in the presence of local optima with regard to the distance to the target.

We suggest to model Social Overlays as graphs embedded in $\mathbb{Z}_n^m$ with a scale-free degree distribution
with exponent $\alpha$. The inaccuracy of the embedding is measured by a parameter $C$.
We then show that our previously introduced routing algorithm \emph{NextBestOnce}
achieves an expected routing length of $\calO\left(  \log^{\alpha-1} n  \log \log n + C^3 \log n \right)$ on our Social Overlay model.
A lower bound on the performance of \emph{NextBestOnce} is given by $\Omega\left(  \log^{\alpha-1} n + C\right)$.
Furthermore, we show that leveraging information from the two-hop neighborhood, a Neighbor-of-Neighbor (NoN) modification of our algorithm achieves an expected routing length of $\calO\left(   \log^{\delta(\alpha)(\alpha-1)} n  \log \log n + C^3 \log n\right)$, where $\delta(\alpha) < 1$. Hence, NoN information can indeed be used to improve the asymptotic routing complexity
by more than a constant factor.
\section{Introduction}
\label{sec:intro}
Centralized communication platforms, such as online social networking (OSN) services\schange{and other forums}{}, concentrate data and control in one point.
Delivering all messages and published content through the centralized provider, they allow for perfect tracking and tracing of the communicating individuals.
Current, highly successful services, like for instance Facebook, prohibit encryption and hence gain full access to the exchanged content.
\change{To increase reliability and independence from the provider, as well as the privacy of communication, several decentralized approaches have been suggested recently.}
{To enhance reliability, user-control, and privacy of communication, several decentralized approaches have been suggested recently.}

The extent of decentralization of such platforms varies, depending on their trust assumptions and security objectives:
hybrid schemes of decentralized servers, like diaspora\footnote{http://www.joindiaspora.com}, replace the single centralized instance by several interconnected servers to which the users register and connect.
Though the chosen servers can still monitor messages and behavior of their users, no central entity has full access to \change{and control over}{}all data.
Further decentralization is sought by peer-to-peer OSNs \cite{buchegger09peerson,cutillo09privacy}, which fully decentralize the service provision to all participating parties.
The devices of all participants, henceforth called {\em nodes}, are interconnected in an overlay that allows for content discovery, publication, and retrieval.
Access control in these systems is enforced through encryption and key management.
Participating in the peer-to-peer overlay, the nodes accept and establish connections with arbitrary other nodes, thus disclosing their network address to strangers, which potentially crawl the network to discover and monitor the participation of individuals.
\change{Darknets, more precisely {\em Social Overlays} \cite{ClarkeSTV10,VassermanEtAl09-MCON,EvansGrothoff11-R5N}, prevent such discovery.}{ {\em Social Overlays} \cite{ClarkeSTV10,VassermanEtAl09-MCON,EvansGrothoff11-R5N}, also called Darknets, prevent such discovery by design.}
Mapping trust of individuals onto the system, they allow connections between devices only if their owners share a mutual trust relationship in real life.
Social Overlays hence evolve topologies that reconstruct subgraphs of the social network of their participating individuals.
Both explicitly (e.g. the profile and messages) as well as implicitly shared data (e.g. participation and communication patterns) consequently can be hidden from untrusted parties.

Social Overlays currently are far from efficient enough to provide acceptable social networking or similar real-time communication services, and there is a distinct lack of analytical understanding that prevents significant enhancements.
For a larger acceptance of such privacy-preserving overlays, it is necessary to \change{solve the routing problem}
{design efficient routing algorithms with guaranteed convergence}.
\change{}{Existing models for deterministic polylog routing in small-world networks assume a base
graph in from of a lattice, which is not given in Social Overlays.}
\\
Our contribution to this complex topic is 
1) a framework for analyzing routing algorithms in the described scenario, 
2) a provably polylog routing algorithm based only on information about direct neighbors and 
3) an analysis of the gain achieved \change{with additional information when}{by additionally} considering the two-hop neighborhood for routing decisions.

\subsection{Social Overlays}
Overlays in general are application layer networks.
Formally, they are represented by a graph $G=(V,E)$ of nodes $V$ and edges $E$ between nodes.
Structured peer-to-peer systems, including distributed hash-tables (DHTs), introduce a metric namespace $M$ 
and a function $dist : M \times M \to \Real$, indicating the distance of two identifiers within this namespace.
\change{Nodes joining the overlay draw a random identifier from the namespace and after bootstrapping through an arbitrary node connect to selected nodes in order to form a routing structure.
The structure is created to allow for a stateless routing, and greedy routing is applied, which strictly decreases the overall distance to the destination in each step.
It is guaranteed to converge successfully at the destination if all closest neighbors with respect to the distance function $dist$ are connected by an edge $e \in E$.
Structured peer-to-peer systems frequently create topologies forming a lattice or a tree on the namespace, for this purpose.}
{Each node $u \in V$ is mapped to an identifier $id(u) \in M$. Edges are then chosen in such a way that the standard
routing algorithm is guaranteed to converge in a polylog number of steps.}

Social Overlays, limited by the constraint of establishing connections only between devices of individuals with mutual trust, are prevented from creating such topologies.
Early approaches for Darknets, for example Turtle \cite{PopescuEtAl06-Turtle}, use flooding, and hence are aimed at rather small group sizes.
Probabilistic search has been implemented in OneSwarm \cite{IsdalEtAl10-OneSwarm}, a Darknet protocol for BitTorrent.  
Both approaches can lead to large overhead, low success rates and long routes in case of rare files and sparse topologies. 
Second-level overlays have been proposed to decrease delays and overhead:
MCON and XVine \cite{VassermanEtAl09-MCON,MittalEtAl12-XVine} hence implement structured peer-to-peer systems by connecting the closest neigbors in the namespace through tunnels on the Social Overlay.
Discovering and maintaining these tunnels under churn, however, introduces a high overhead.
They furthermore are characterized by high delays, which make them unsuitable for most social applications.  
GNUnet, an anonymous publication system with a Darknet mode, uses recursive Kademlia for routing, restricting the neighbors to trusted contacts \cite{EvansGrothoff11-R5N}. 
Consequently, routing frequently terminates in dead ends, i.e. when a node is reached without any neighbor closer to the target. 
It hence requires a high replication rate to still locate content. 
All given approaches have mainly been proposed for anonymous file-sharing with a high replication rate for popular files. 
They are not designed to provide social networking or real-time communication services.

{\em Embedding} a routing structure into the social graph has been proposed as an alternative solution to increase the efficiency of Social Overlays.
Formally, an embedding is a function $F:V\to M$ from the set of nodes into a suitable metric namespace.
Though any such function $F$ qualifies as an embedding, the aim is to approximate a routing structure, which allows an algorithm to efficiently route messages from any source $s \in V$ to any destination $t\in V$ based on local knowledge.
The ratio between the length of the routed paths compared to the length of the actual shortest paths within the overlay is commonly termed as the {\em stretch}.
The ideal case would be a {\em no-stretch} embedding, however, no algorithm is known to achieve this, so that $M$ requires only a polylog number of dimensions.
Embeddings that allow the standard routing algorithm to terminate successfully for all source-target pairs are called {\em greedy embeddings}.
\change{They achieve that all nodes are embedded, such that they share edges with their closest neighbors on the namespace.}{
They achieve that nodes share edges with those closest to them in the namespace.}
In other words: An embedding $F$ is called greedy if for all distinct node pairs $s,t \in V$, $s$ has a neighbor $v$ that is closer to $t$, i.e. $dist(F(v), F(t)) < dist(F(s),F(t))$.
Extensive research has been performed on \emph{greedy} embeddings \cite{Maymounkov06, Kleinberg07, CvetkovskiCrovella09, FluryEtAl09, EppsteinGoodrich11-SuccinctHyperEmbed, WestphalPei09, HerzenEtAl11}, especially for wireless sensor networks and Internet routing.
All these approaches share the idea of constructing a spanning tree of the graph, which then is embedded into a hyperbolic, euclidean or custom-metric space.
The resulting embedding is a greedy embedding of the complete graph as well. 
Routing along the tree is always successful, and shorter paths may be found using additional edges as short cuts.
Dynamic node participation and potentially adversarial activities, however, require costly re-computation and maintenance of spanning tree and embedding.
The central role of the root node additionally represents a perfect target for attacks, and the identifiers disclose the structure of the spanning tree. 

Several more robust and privacy preserving embeddings have been proposed to meet Darknet requirements \cite{Sandberg06, DellAmico07, SchillerEtAl11-LCM}.
Rather than creating and embedding a spanning tree, these approaches aim at embedding the social graph in
a m-dimensional lattice using periodic adjustments of the node identifiers.
In the Darknet mode in Freenet, for example, the social graph is embedded in a ring over the namespace $[0,1)$.
\change{All nodes choose identifiers from this namespace randomly when joining.
Sampling random other nodes, they subsequently collaboratively aim at increasing the accuracy of the embedding, to enhance the approximation of a ring structure over the topology.}
{All nodes choose identifiers from this namespace randomly when joining. 
Periodically, a node selects a partner sampled by a short random walk. The two nodes decide if they swap identifiers
for an increased accuracy of the embedding, i.e. a better approximation of a ring structure over the topology.}
The resulting embeddings of these approaches are inaccurate. For instance, nodes that are neighbors in the namespace not necessarily are topological neighbors.
The standard routing algorithm hence fails and has to be adapted to deal with local optima during the routing process.

Freenet \cite{ClarkeSWH00} suggests a \emph{distance-directed depth-first search} to mitigate inaccuracies in the embedding.
Messages are forwarded to the neighbor closest to the destination that has not been contacted before.
A backtracking phase starts if a node has no neighbors left to contact or repeatedly receives the same message.
This algorithm has a low performance in large networks. The current implementation of Freenet hence additionally uses information about the identifiers of the two-hop neighborhood, thus implementing a {\em Neighbor-of-Neighbor} (NoN) routing algorithm.


\subsection{Routing: Models, Algorithms, and Complexity}

When analyzing decentralized routing algorithms, the most intensively studied property is the (maximal) expected routing length.
Let $V$ be the set of nodes and $R^A(s,t)$ denote the number of steps needed to route from node $s$ to node $t$ using algorithm $A$.
The maximal expected routing length is then given by $\max_{s, t \in V} \E(R^A(s,t))$. 
The expected routing length is similarly defined as  $\frac{1}{|V|(|V|-1)} \sum_{s \neq t \in V} \E(R^A(s,t))$.

One of the first models for the analysis of routing in small-world graphs has been proposed by Kleinberg \cite{kleinberg00small}. 
Here, nodes are placed on a m-dimensional lattice.
Each node $v$ then is connected to all nodes within distance $p\geq 1$ and additionally has $q\geq 1$ long-range contacts. 
A long-range contact $u$ is chosen with probability anti-proportional to $d^r$ for some $r>0$, where $d$ is the distance of $v$ to $u$.
The routing length of the standard algorithm  with respect to the described topology model is polylog if and only if $r=m$.
These results for the case $r=m$ have been extended in various ways: 
It has been shown that the standard routing algorithm has expected routing length $\Theta\left(\log^2 n\right)$ steps.
Since the diameter is logarithmic, this is not asymptotically optimal. 
Consequently, extensions of the routing algorithm using the information of $\lceil \log n \rceil$ 
nodes in each step have been proposed, which reduce the expected routing length to  $\Theta\left(\log^{1+1/m} n\right)$ \cite{Martel03thecomplexity}.
Similar alternative routing algorithms, considering a larger neighborhood before choosing the next hop, have been discussed in \cite{Lebhar04almost, Giakkoupis11optimal}. 
Though achieving close to optimal or optimal performance, these algorithms are designed considering a constant degree distribution. 
Furthermore, they are based on
additional knowledge about the network size, which is not supposed to be known in a privacy-preserving embedding.
Closer related to the topic of Social Overlays, the standard routing algorithm has been analyzed in case  the degree of a node is chosen according to a scale-free distribution with exponent $\alpha$.
The expected routing length for directed scale-free graphs is asymptotically the same as in the original model, but in case of undirected links, it is reduced to $\calO\left( \log^{\alpha-1} n \log \log n \right)$ \cite{fraigniaud09effect}.
The case of using Neighbor-of-Neighbor (NoN) information for routing has been treated in \cite{Manku04NoN}, finding that with $\Theta(\log n)$ neighbors per node the expected routing length is asymptotically equal to the diameter.

Additionally to the routing performance, various properties of small-world models have been analyzed. 
Detailed studies on the diameter of such graphs with regard to the clustering exponent $r$ have been made \cite{Coppersmith02diameter,Martel04analyzing}.
Furthermore, an generative model on how long-range links are created by a random process modeling the movement of individuals over time has been suggested   \cite{Chaintreau08networks}.
However, all works assume an underlying lattice structure, so that each node shares an edge to those that are closest to it. 
Considering arbitrary base graphs rather than lattices leads to an expected routing length of $n^{\Omega(1)}$ \cite{Fraigniaud10searchability}. 
We assume that our embedding algorithms provide an enhanced structure and thus a lower routing complexity than using unstructured graphs.
To the best of our knowledge, the only result about local edges is that they are necessary for the connectivity of the graph \cite{Martel03thecomplexity}.
Hence, though heuristic embedding algorithms do not achieve links between nodes closest in the namespace, connectivity and routing success require some type of local connections.
We build our models considering, extending and complementing the above results. 
Our main modification lies in introducing a parameter governing the accuracy of the embedding, while at the same time guaranteeing connectivity always surely.


\subsection{Prior Work and Contributions}

In prior work, we have extended Kleinberg's model to address the expected inaccuracy of heuristic embeddings.
Nodes hence are not connected to their closest, but to nodes within a specified distance in a lattice \cite{roos12provable}. 
The accuracy of the embedding is reflected by the maximal distance $C$ between closest neighbors.
We also have shown that the Freenet algorithm does not achieve polylog routing paths and suggested \emph{NextBestOnce} \cite{roos13contribution}.

In this paper, we prove that \emph{NextBestOnce} has polylog maximal expected routing length for sufficiently accurate embeddings of social graphs. 
We model social graphs as graphs with a scale-free degree distribution with exponent $\alpha$. 
Additionally, we quantify the gain of using information about the neighbors' neighbors for routing. 
The extended algorithm \emph{NextBestOnce-NoN} is shown to have a maximal expected routing length of $\calO\left(\log^{\delta(\alpha)(\alpha-1)} n \log \log n + C^3 \log n\right)$ for $\delta(\alpha) < 1$,
whereas \emph{NextBestOnce} only achieves an expected routing length of $\Omega\left(\log^{(\alpha-1)} n + C \right)$
and $\calO\left(\log^{(\alpha-1)} n \log \log n + C^3 \log n\right)$.

The existence of local minima with regard to the distance to the target complicates the analysis decisively.
Our methodology needs to depart from the traditional analysis of routing as an integer-valued decreasing random process. 
Various techniques for probabilistically bounding the increase due to such a local minima have been exploited to provide the required proofs.

We start by precisely defining the model and introducing the routing algorithms in Section \ref{sec:model}.
Afterwards, we state our results with short sketches of the proofs in Section \ref{sec:results}.  
In Sections \ref{sec:upper} and  \ref{sec:lower} the proof for the upper and lower on \emph{NextBestOnce}
are presented. \emph{NextBestOnce-NoN} is treated in Section \ref{sec:nonbound}.
The paper is completed by a discussion of the results and their impact in Section \ref{sec:conclusion}. 

%

\section{Preliminaries}
\label{sec:model}

In this section, our graph model of a Social Overlay
, routing algorithms, and central definitions are introduced.

\subsection{Modelling inaccurate embeddings}
\label{sec:somodel}

We use a model for restricted topologies with heuristic embeddings, an extension
to Kleinberg's small-world model \cite{kleinberg00small}. Though Kleinberg's model
proposes an explanation how short paths are found in small-world networks, it is only of restricted
use with respect to Social Overlays. The main discrepancy between an embedded trust graph and 
Kleinberg's model is the underlying lattice structure of the latter. The currently employed privacy-preserving
embedding algorithms cannot achieve a greedy embedding. Rather, they result in nodes that are not
connected to those that are closest with regard to the distance of identifiers.
Our Social Overlay model provides a parameter $C$ for the accuracy of the embedding.
Furthermore, Kleinberg's model is extended to allow for arbitrary degree distributions
and undirected graphs \cite{roos13contribution}.

A graph of the class $\mathcal{D}(n, m, C, L)$ consists of $n^m$ nodes, arranged in a
$m$-dimensional hypercube of side length $n$, so nodes are given unique identifiers (IDs)
in $\mathbb{Z}_n^m$. In the following, we use the name $v$ of a node synonymously with
its identifier $id(v)=(v_1,\ldots , v_m)$. 
The distance between two nodes $u$ and $v$ is given by the Manhattan distance with wrap-around:
\begin{align*}
\label{eq:dist}
\begin{split}
dist(v,u) =  \sum_{i=1}^d \min\{|v_i - u_i|,  n - |v_i - u_i| \}
\end{split} 
\end{align*}

The parameter $C$ is a measure for the
accuracy of the embedding, and gives the maximal distance to the closest neighbor
in each principal direction.

More precisely, each node $v=(v_1,\ldots ,v_m)$ is given short-range links to neighbors $ a^v_1,\ldots ,a^v_m, b^v_1,\ldots ,b^v_m$. Here $a^v_j$ is chosen from the set 
\begin{align*}
\begin{split}
A^v_j = \{ u = (u_1,...,u_m) \in V : u_i = v_i \text{ for } i \neq j, 
 1 \leq \min \{ u_j - v_j, n + u_j - v_j \} \leq C\}.
\end{split}
\end{align*}
Analogously, $b^v_j$ is chosen from
\begin{align*}
\begin{split}
B^v_j = \{ u = (u_1,...,u_m) \in V : u_i = v_i \text{ for } i \neq j, 
 1 \leq \min \{ v_j - u_j, n + v_j - u_j \} \leq C\}. 
\end{split}
\end{align*}
The random variable $L$ governs the degree distribution, an inherent property of the trust graph.
In addition to the short-range links, long-range links are chosen in a two step process: 
\begin{enumerate}
\item choose a label $l_v \in \mathbb{N}$, distributed according to $L$, for each node $v \in V$ 
\item connect nodes $u,v$ with probability
\begin{equation} \label{eq:LRtilde} P(l(u,v) | l_u = l_1,l_v=l_2,dist(u,v)=d) = 1-e^{-\frac{l_1l_2}{d^m\gamma}} \end{equation}
where $\gamma $ is a normalization constant chosen such that 
\begin{align}
\label{eq:expectation}
2\sum_{d=1}^{n/2}\sum_{l_1=1}^\infty \left( 1-e^{-\frac{l_1}{d^m\gamma}} \right) P(L=l_1) = 1,
\end{align}
i.e. the expected number of long-range links of a node $v$ with label $l_v = 1$ is 1. 
\end{enumerate} 
Note that we abbreviate the event $\{ l(u,v)=1 \}$ by $l(u,v)$. In general, brackets indicating events are
dropped to enhance readability in later sections.
The above model has proven useful in analyzing routing alternative to a memoryless greedy approach, which
is bound to fail in case the embedding is not greedy.

Fraigniaud and Giakkoupis \cite{fraigniaud09effect} also analyzed routing in small-world networks with a scale-free degree distribution.
In their generative model, long-range links are first created as directed edges
and then the reverse edges are added.
This approach complicates an analysis of NoN routing, since
one has to distinguish in which direction edges were originally
selected.



\subsection{Routing Algorithms}
\label{sec:algos}

For deterministic routing based on a non-greedy embedding, state information is needed to avoid loops
and dead ends.  
Backtracking is used in case a node has no suitable neighbor to forward the message to. 
Furthermore, nodes are \emph{marked} when they should only be contacted for backtracking in the future. 

\begin{minipage}[ht]{.98\linewidth}
\begin{algorithm}[H]
\caption{\footnotesize NextBestOnce$^*$(Node v, Node p, ID t, Set $B$, boolean b)}
\label{algo:route}
\begin{algorithmic}[1]
{\small
\STATE \COMMENT {input: v: message holder, p: predecessor, t: target, $B$: {\em marked} nodes, b: backtracking?}
\STATE \COMMENT {\ensuremath{N_v}: neighbors of v}
\STATE \COMMENT {\ensuremath{IDS(u)}: set of identifiers associated with $u$}
\IF {id(v)  == t}
\STATE routing successful; terminate
\ENDIF
\IF {!backtrack}
\STATE v.predecessor.add(p);
\ENDIF
\STATE $S = \{ u \in N_v:  !B.contains(u) \}$
\IF {$S$ NOT EMPTY}
\STATE nextNode = $argmin_{u \in S} dist(IDS(u)),t)$ 
\STATE b = false; 
\IF {$dist(nextNode,t) \geq dist(v,t)$}
\STATE B.add(v)
\ENDIF
\ELSE
  \STATE B.add(v)
   \STATE nextNode = v.predecessor.pop(); 
    \STATE b = true; 
\ENDIF    
\IF {nextNode != null}
\STATE NextBestOnce(v, t, nextNode, B, b) 
\ELSE
\STATE routing failed; terminate
\ENDIF
} 
\end{algorithmic}
\end{algorithm}
\end{minipage}
\newline

The order by which nodes are \emph{marked} is crucial for the routing performance. 
The straight-forward approach, currently implemented in Freenet, 
is a distance-directed depth-first search, \emph{marking}
nodes the first time they are contacted. 
However, this algorithm does not achieve polylog expected routing length, as is shown in
\cite{roos13contribution}.
Consequently, \emph{NextBestOnce} was introduced, which allows a node to contact all
neighbors closer to the destination before \emph{marking} it.
\emph{NextBestOnce} has been shown to achieve
polylog maximal expected routing length for constant $C$ for a simplified version of the
above model, 
but simulations indicate that for
realistic network sizes, the performance gain in comparison to Freenet routing
is limited \cite{roos12provable, roos13contribution}.

Hence, we suggest to enhance the performance of the algorithm by using additional
information. Rather than only one identifier, each node provides a set of
identifiers. 
The extended algorithm, \emph{NextBestOnce$^*$}, is described in Algorithm \ref{algo:route}.
The input of \emph{NextBestOnce$^*$} consists of the current message holder $v$, the predecessor $p$ of $v$, the target ID $t$, 
the set $B$ of \emph{marked} nodes, and a flag $b$ indicating if the routing
is in the backtracking phase. Note that $B$ can be realized in a privacy-preserving manner,
e.g. by relying on a bloom filter, and is not decisive for the asymptotic routing length.
Each node keeps a stack of predecessors for backtracking, which are contacted if $v$
has only \emph{marked} neighbors (ll. 19-20).

If at least one neighbor is not \emph{marked}, 
$v$ selects the not \emph{marked} neighbor 
$u$, so that the distance to one of the identifiers in $IDS(u)$ is minimal (l. 12).
For \emph{NextBestOnce}, this set of identifiers only consists of the
ID of $u$, the generic version allows a node to provide multiple IDs.
Though we focus on the case that $IDS(u)$ consists of the IDs of $u$ and its neighbors,
the algorithm \emph{NextBestOnce$^*$} allows for e.g. multiple realities as well. 

After determining the next node $u$ on the path, $v$ is \emph{marked} if $u$
is at a larger distance to $t$ (l. 15) or backtracking starts (l.18). 
To guarantee termination, only one representative ID
of $u$ is considered for the decision of \emph{marking} $v$.


\subsection{Definitions and Notation}
\label{sec:notation}

In the remainder of the paper, we analyze the performance of two 
routing algorithms based on \emph{NextBestOnce*}.
The first one,  \emph{NextBestOnce} has been proposed
in \cite{roos12provable} and only uses the identifiers
of the direct neighbors, i.e. $IDs(u)=id(u)$ in Algorithm
\ref{algo:route}.
The second algorithm, \emph{NextBestOnce-NoN}, uses information about neighbors
of neighbors, i.e. $IDs(u) = \{id(u) \} \cup \bigcup_{v \in N(u)} \{ id(v) \} \}$,
where $N(u)$ is the set of neighbors of node $u$.
  

The number of hops required by \emph{NextBestOnce}, respectively 
\emph{NextBestOnce-NoN},
to find a path from source $s$ to destination s$t$ is denoted by $R^{NBO}(s,t)$ and $R^{NoN}(s,t)$, respectively.

The performance is analyzed with regard to the Social Overlay model presented above.
Labels are chosen according to a scale-free distribution $S_{\alpha}$ with exponent $2 < \alpha < 3$ 
and a maximum $\mu$, i.e.
\begin{align}
\label{eq:salpha}
P(S_{\alpha}=k) \propto \frac{1}{k^{\alpha}}, \quad k=1\ldots \mu.
\end{align}
Scale-free degree distribution are common in various complex networks, especially social
networks.
Furthermore, the set $B_{d} (v) = \{u: dist(v,u) < d \}$ contains all nodes at
distance less than $d$ of $v$.

For reasons of presentation, results are given for $m=1$ dimensions, but can
analogously be derived for multi-dimensional identifier spaces.
\section{Results}
\label{sec:results}



We present upper and lower bounds for the performance of \emph{NextBestOnce}.
The routing length increases at least linearly with $C$,   
the maximal distance to a local neighbor.
If $C$ is constant, the bounds are the same as those in \cite{fraigniaud09effect} for a small-world
model with a scale-free degree distribution and edges between all nodes within distance
1. 
This agreement is non-trivial, since it has been shown in 
\cite{roos13contribution} that straight-forward extensions to the standard routing algorithm
do not achieve polylog performance for $C > 2$.

\begin{theorem}
\label{thm:nboupper}
For a graph $G=(V, E) \in \mathcal{D}(n, 1, C, S_{\alpha})$ and two nodes $s,t \in V$ with
distance $d=dist(s,t)$,
an upper bound on the expected routing length of \emph{NextBestOnce} is given by
\begin{align}
\E(R^{NBO} (s,t)) = \calO\left(\log^{\alpha-1} d \log \log d + C^3 \log n \right).
\end{align}
The maximal expected routing length is consequently
\begin{align}
\max_{s, t \in V} \E(R^{NBO} (s,t)) = \calO\left(\log^{\alpha-1} n \log \log n + C^3 \log n \right).
\end{align} 
\end{theorem}

For the proof two routing phases are considered. First, the number of steps
to reach a node within distance $C$ of the target is bound by 
$\calO\left(\log^{\alpha-1} n \log \log n\right)$. Note that up to this point, the distance to
the target can be modeled as a monotonously decreasing random process. Hence,
the proof is essentially the same as in \cite{fraigniaud09effect}. Our 
contribution lies in bounding the remaining number of steps. This consists
of a) showing that with high probability no node at a distance exceeding
$C^2\log n$ is contacted from this point on, and b) the worst case complexity
of \emph{NextBestOnce} on a graph of size $N$ is $\calO\left(CN\right)$. The bound then follows
from applying b) to the subgraph of size $C^2\log n$. The proof is presented
in Section \ref{sec:upper}.

\begin{theorem}
\label{thm:nbolower}
For a graph $G=(V, E) \in \mathcal{D}(n, 1, C, S_{\alpha})$ with $C < \frac{1}{4}n^{1/4}$ and two nodes $s,t \in V$,
a lower bound on the expected routing length of \emph{NextBestOnce} is given by
\begin{align}
\frac{1}{n(n-1)} \sum_{s \neq t \in V}\E(R^{NBO} (s,t)) = \Omega\left(\log^{\alpha-1} n + C\right)
\end{align}
\end{theorem} 

Again, the first term follows essentially from \cite{fraigniaud09effect}, bounding
the number of steps to reach a node within distance $C$ of the target.
The second term is derived by considering that with constant probability $t$
and its only 2 local neighbors have at most one long-range link each. 
Conditioning
on this event, it can shown that again with constant probability routing needs 
at least $C/4$ steps after reaching a node within distance $C$ of $t$. 
The proof can be found in Section \ref{sec:lower}. 

This concludes our results for \emph{NextBestOnce}. The upper bound guarantees
polylog routing length as long as $C$ is polylog. 
The lower bound
provides a way to measure the gain of NoN information for routing.
We show that when considering NoN information the expected routing length is reduced by more than 
a constant factor, although the average degree is constant. 
Indeed, the routing
length $T$ of \emph{NextBestOnce} is reduced to $T^\delta$ for some $0 < \delta < 1$
in case of \emph{NextBestOnce-NoN}.

\begin{theorem}
\label{thm:non}
For a graph $G=(V, E) \in \mathcal{D}(n, 1, C, S_{\alpha})$, 
an upper bound on the maximal expected routing length of \emph{NextBestOnce-NoN} is given by
\begin{align}
\max_{s,t \in V} \E(R^{NoN} (s,t)) = \calO\left(\log^{\delta(\alpha)(\alpha-1)} n \log \log n + C^3 \log n \right) \textnormal{ for } \delta(\alpha)=1-\frac{(\alpha-2)(3-\alpha)}{\alpha}.
\end{align}
\end{theorem}

The proof is rather lengthy. Selecting parameters $0 \leq k \leq \alpha-2$, and $1/2 \leq r \leq 1$,
we bound the number of steps to reach a node within distance $e^{\log^r n}$ of $t$ by 
$\calO\left(\log^{\alpha-r - k(3-\alpha)} n \log \log n\right)$. 
The idea is to determine the probability of halving the distance in the next two steps. 
For this purpose, the probability of contacting
a nodes of degree $\log^k n$ and $\log n$ is derived.
With constant probability, the later has a neighbor at half its distance to $t$. 
From this, the above bound for the first phase can then by derived using basic results about stochastic processes.
The steps needed to cover the remaining distance
are then at most $\calO\left(\log^{r(\alpha-1)}n + C^3 \log n\right)$
by Theorem \ref{thm:nboupper}.
Afterwards, the result is obtained by finding the minimum of a two-dimensional extremal value problem with
variables $k$ and $r$.
The complete proof is presented in Section \ref{sec:nonbound}.

Our results are obtained under the assumption that the graph $G$ is connected.
This holds for $C=\calO(n^s)$ for any $s < 1$
with overwhelming probability. The proof of this follows easily from Lemma \ref{lem:logrNeigh}
and can be found in \cite{roos11analysis}.


\section{Proof of Theorem \ref{thm:nboupper}}
\label{sec:upper}
The upper bound on \emph{NextBestOnce}'s routing length is derived by dividing the routing
into two phases: the number of steps $R^{NBO}_1(s,t)$ needed to reach a node $v$ within distance $C$ of $t$ 
and the number of steps $R^{NBO}_2(s,t)$ to reach $t$ from $v$.

\begin{lemma}
\label{lem:phase1}
For a graph $G=(V, E) \in  \mathcal{D}(n, 1, C, S_{\alpha})$ and two nodes $s,t \in V$
with $d=dist(s,t)$,
the expected routing length of \emph{NextBestOnce} during the first phase is
\begin{align*}
\E(R^{NBO}_1(s,t)) = \calO\left( \log^{\alpha-1} d \log \log d\right)
\end{align*}
\end{lemma}
Since the distance to $t$ decreases by at least $1$ in each step during the
first phase, the above lemma is essentially treated in \cite{fraigniaud09effect}.
A complete proof for our slightly different model can be found in \cite{roos11analysis}.


In the following, we prove that the remaining distance is covered in 
$\calO\left( C^3 \log n \right)$
steps.
This requires two preliminary results:
It needs to be shown that there exist polylog paths between two nodes within distance $C$, 
and that these paths are found by \emph{NextBestOnce}.
Lemma \ref{lem:logrNeigh} gives the probability that two nodes are connected by
a so called \emph{greedy} path, i.e. a path $u_0,u_1,...,u_{l+1}$, so that $dist(u_i,v) < dist(u_{i-1},v)$
for $i=0,\ldots ,l$. 
Let $g(v,w)$ indicate if $v$ and $w$ are connected by a greedy path. For brevity, we write $P(g(v,w))$ rather
than $P(g(v,w) = 1)$.
Secondly, we show in Lemma \ref{lem:performance} that \emph{NextBestOnce} has routing length $\calO\left( CN\right)$ on any (sub-)graph
of order $N$.  Finally, Lemma \ref{lem:logrNeigh} is applied to show that with overwhelming probability
no node at distance $\Omega\left( C^2 \log n \right)$ to $t$ is contacted, so that the bound $\calO\left( C^3 \log n \right)$ is a 
consequence of Lemma \ref{lem:performance}. 


\begin{lemma}
\label{lem:logrNeigh} 
 For two nodes $w,v \in V$ with $dist(w,v) > C^2\log n$, the probability that $w,v$ are connected by a greedy path is
 $P(g(w,v)) = \Omega\left( 1 - \frac{1}{n}\right)$
\end{lemma}
\begin{proof}
Recall from Section \ref{sec:somodel} that each node $u$ has two short-range neighbors 
$a^u_1$ and $b^u_1$ chosen independently of each other. They are both within distance $C$ of $u$, but in opposite directions. 

For all pairs $(v,w) \in V\times V$, there is a path of short-range links of length at most $C$ originating at $v$ leading to a node within distance $C$
of $w$ and via versa. A greedy path between $v$ and $w$ exists if those two paths intersect (see Figure \ref{fig:greedypath}). 
Denote by $g^a(u_0,u_{l+1})$ the event that $u_0$ and $u_{l+1}$ are connected by a greedy path and
$a^{u_i}_1 = u_{i+1}$ for $i=0, \ldots , l$.
$g^b(u_0,u_{l+1})$ is defined analogously.
Without loss of generality, $w$ is 'above' $v$ in the namespace, 
i.e. $dist(w,v) = w-v$ mod $n$.

In the following, we 
show that
  $P(g^a(v , u)) \geq \frac{1}{C}$
for all $u \in U:= \{u \in V: dist(v,w) = dist(v,u) + dist(u,w) \}$, i.e.
all nodes in the shorter ring segment between $v$ and $w$.
If $dist(v,u) \leq C$, then $P(g^a(v ,u)) \geq P(a^v_1 = u) = \frac{1}{C}$ by the choice of short-range links. 
Otherwise, there exists a node $z$ such that $g^a(v, z)$ holds and
$dist(v,u) - C \leq dist(v,z) < dist(v,u)$.
It follows that $P(g^a(v ,u)) \geq P(a^z_1 = u) = \frac{1}{C}$.
Similarly, $P(g^b(w ,u)) \geq \frac{1}{C}$ holds for all $u \in U$. 

Because $a^u_1$ and $b^u_1$ are chosen independently, the probability that the two paths
intersect can be bounded as follows:
\begin{align*}
\begin{split}
P\left(g(v,w)\right)
&\geq P\left(\bigcup_{u \in U} \left(g^a(v,u)\cap g^b(w,u)\right) \right) \\
&\geq 1 - \left(1-\frac{1}{C^2}\right)^{C^2 \log n} \\ 
&\geq 1 - e^{-\frac{C^2log n}{C^2}} \\
&= 1 - \frac{1}{n}
\end{split}
\end{align*}
The third inequality follows from $1-x \leq e^{-x}$ for $x \in [0,1]$.
\end{proof}
As a second step, a worst-case bound on the routing length of \emph{NextBestOnce} 
is needed.

\begin{lemma} 
\label{lem:performance}
Let $G = (V,E)$ be an undirected graph that is embedded in $\mathbb{Z}_{|V|}$, so that all $v \in V$ are connected to nodes within distance $C$ in each direction.  The expected routing length of \emph{NextBestOnce} on $G$ is bounded by
\begin{align*}
\max_{s,t \in V}\E(R^{NBO} (s,t)) = \calO\left( C|V| \right)
\end{align*} 
\end{lemma}
\begin{proof}
The algorithm definitively terminates after every node has been {\em marked}. We show that in average
at least every $C$-th node is {\em marked}. 
The bound follows immediately.
First note that the maximal increase in distance per hop is $C$: Each node $u$ has a short-range link to a node $v$, so that the $ dist(u,t) < dist(v,t) \leq dist(u,t) + C$.  
$v$ is not yet {\em marked}, because a node is only {\em marked} after all neighbors closer to the destination, including the current message holder $u$, have been {\em marked}. 
Therefore, \emph{NextBestOnce} can always choose a successor within distance $C$.
The maximal path length without producing a circle is $|V|$.
Assume the algorithm produces a circle of length $l$. 
Then at least $l/C$ nodes on the circle are {\em marked}. 
To see this, recall that an increase in the distance $t$ implies that a node is declared {\em marked}. 
In case of a circle the sum of the distance changes per hop equals zero, 
so the distance is increased in at least $\frac{1}{C} = \frac{minDecrease}{maxIncrease}$ of all hops of the circle.
The maximal number of hops without circles and the maximum number of hops in circles until all nodes have been \emph{marked}
give the bound 
\begin{align*}
\max_{s,t \in V}\E(R^{NBO} (s,t)) 
\leq |V|+C|V| 
= \calO\left( C|V| \right).
\end{align*}
\end{proof}
It follows that the maximal number of steps is linear in the network size if the maximal increase to the destination is restricted by a parameter $C$ independent of $n$. 
For arbitrary graphs, the algorithm terminates after $\calO\left(n^2 \right)$ steps by  Lemma \ref{lem:performance}. 
The last two lemmata enable us to bound the complexity of 
\emph{NextBestOnce} during the second phase. 

\begin{lemma}
\label{lem:phase2}
For a graph $G=(V, E) \in  \mathcal{D}(n, 1, C, S_{\alpha})$ and two nodes $s,t \in V$,
the expected routing length of \emph{NextBestOnce} in the second phase is bounded by
\begin{align*}
\E(R^{NBO}_2(s,t)) = \calO\left( C^3 \log n\right).
\end{align*}
\end{lemma}
\begin{proof}
Denote the first node in $B_{C}(t)$ that is on the routing path by $v$.
Consider the event $A$ that no node at a distance exceeding $C^2 \log n + C$ to the set $\{ v,t\}$ is contacted during the second phase of
the routing.
Using $A^{\bot}$ for the complement of $A$, we get
\begin{align*}
\begin{split}
 \E(R^{NBO}_2(s,t)) 
 &= P(A)\cdot \E(R^{NBO}_2(s,t)| A) + (1- P(A))\cdot \E(R^{NBO}_2(s,t)| A^{\bot}) \\
 &=  P(A) \cdot \calO\left(C^3  \log n \right) + (1- P(A))\cdot \calO\left( Cn \right)
 \end{split}
\end{align*}
The last step follows from applying Lemma \ref{lem:performance} to the subgraph of size $C^2 \log n$ as well as to the
whole graph $G$. 

It remains to determine $P(A)$.
The claim holds if $v=t$. 
Otherwise, let $x^v_i$ for $i=0,...,C-1$ be the node such that $dist(x^v_i,t)= C^2\log n + i$ and $dist(x^v_i,t)> dist(x^v_i,v)$. Analogously, $x^t_i$ denotes the node such that $dist(x^v_i,t)= C^2 \log n + i$ and $dist(x^t_i,v)> dist(x^v_i,t)$. 
The set $X= \{x^u_i: u \in \{v,t \}, 0 \leq i < C \}$ consists of two sets of consecutive nodes at distance $C^2 \log n$ to $C^2 \log n + C -1$ from the set $\{ v,t\}$ (see Figure \ref{fig:setX}).
If a node at a higher distance than $C^2 \log n$ is reached after $v$, at least one node in $X$ needs
to be on the path as well, because the maximal regression per hop is bound by $C$.
Recall that \emph{NextBestOnce} \emph{marks} a node $u$ if all its neighbors closer to $t$ have been \emph{marked}.
It follows recursively that if a successor at a higher distance 
than the current node $u$ is chosen,
all nodes reachable from $u$ by paths along which the distance to $t$
decreases monotonously have been \emph{marked}. 
Consequently, a node $u$ with $dist(u,t) \geq C^2 \log n + C$ can only be on the path if
all nodes $x \in X$ do \emph{not} have a greedy path to $t$. 
\begin{figure}
\begin{minipage}[t]{0.42\linewidth}
\centering
\includegraphics[width=0.95\linewidth]{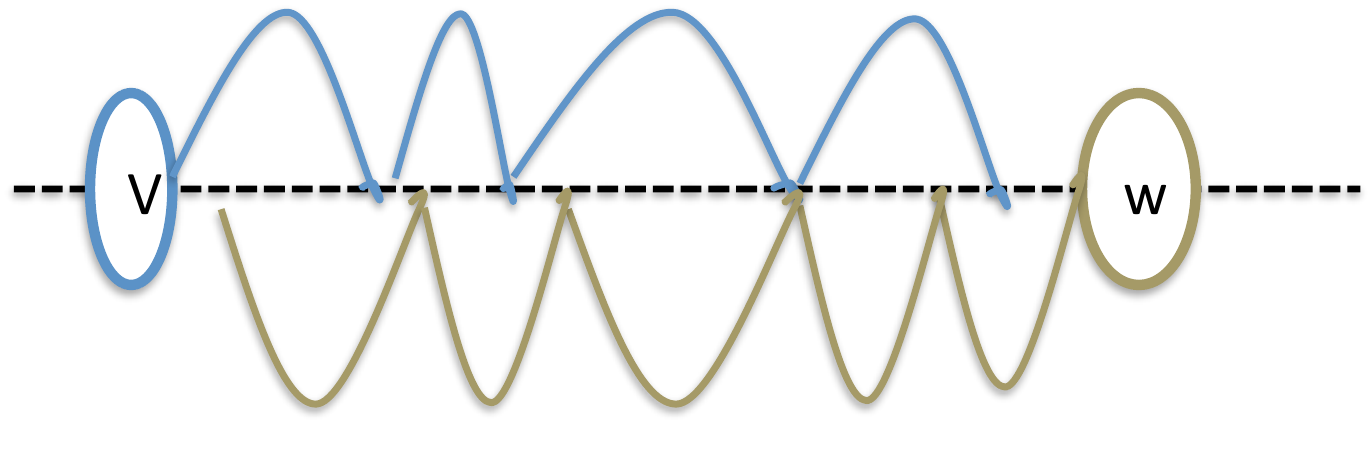}
\caption{Path of edges of length at most $C$ originating from $v$, respectively $w$. 
A greedy path between the $v$ and $w$ exists, because the two paths intersect.}
\label{fig:greedypath}
\end{minipage}
\hspace{0.5cm}
\begin{minipage}[t]{0.54\linewidth}
\centering
\includegraphics[width=0.95\linewidth]{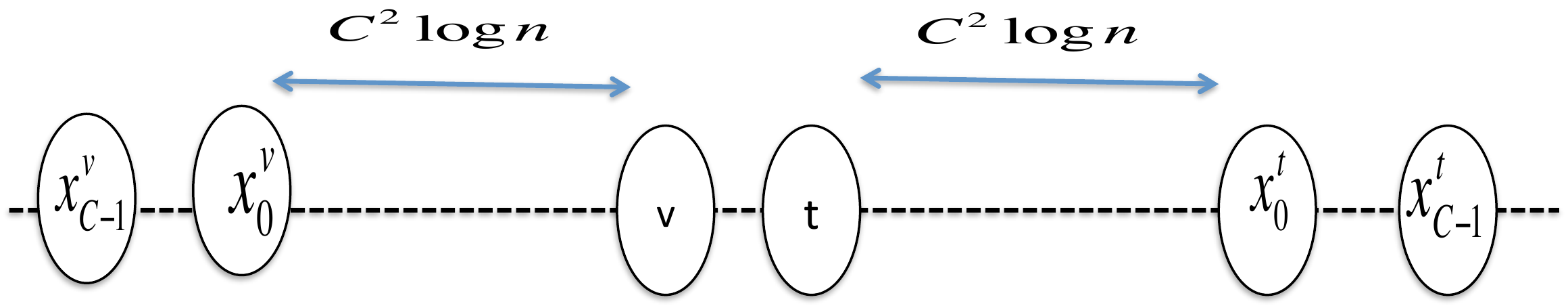}
\caption{Nodes $v$ and $t$ and the corresponding set $X$: No node at a distance exceeding 
$C^2 \log n + C$ is contacted with high probability.}
\label{fig:setX}
\end{minipage}
\vspace{-1em}
\end{figure}

Lemma \ref{lem:logrNeigh} is applied to bound $P(A)$ by  the probability that all $2C$ nodes in $X$ have a greedy path to $t$:
\begin{align*}
P(A) 
\geq P\left(\bigcap_{x \in X} g(t,x)\right) 
= \Omega\left( \left(1 - \frac{1}{n}\right)^{2C}\right)
=\Omega\left(1 - \frac{2C}{n}\right)
\end{align*}
The last step holds since $(1-x)^k \geq 1-kx$ for $0 < x < 1$ and $k > 1$.
Finally, we get
\begin{align*}
\begin{split}
\E(R^{NBO}_2(s,t)) 
=  \Omega\left(1 - \frac{2C}{n}\right) \cdot \calO\left(C^3\log n\right) 
+ \calO\left(\frac{2C}{n}\right)\cdot \calO\left(Cn \right) 
= \calO\left(C^3\log n\right)
\end{split}
\end{align*}
\end{proof}
Theorem \ref{thm:nboupper} is a direct consequence.
\begin{proof}
For a source-destination pair $(s,t)$ with distance $d=dist(s,t)$, the expected routing length
of \emph{NextBestOnce} is bound by
\begin{align*}
\E(R^{NBO}(s,t)) 
= \E(R^{NBO}_1(s,t)) + \E(R^{NBO}_2(s,t)) 
= \calO\left(\log^{\alpha-1} d \log \log d + C^3 \log n\right)
\end{align*}
by Lemma \ref{lem:phase1} and \ref{lem:phase2}.
The distance between two nodes is at most $n/2$, so
\begin{align*}
\max_{s,t \in V} \E(R^{NBO}(s,t)) 
= \calO\left(\log^{\alpha-1} n \log \log n + C^3 \log n\right).
\end{align*}
as claimed.
\end{proof}

\section{Proof of Theorem \ref{thm:nbolower}}
\label{sec:lower}

As for the upper bound, the proof is done by dividing the routing process into two phases.
Let $R^{NBO}_1(s,t)$ be the number of nodes contacted to reach a node within distance $C$
and $R^{NBO}_2(s,t)$ the number of steps needed to get from this node to $t$.

\begin{lemma}
\label{lem:phase1lower}
For a graph $G=(V, E) \in  \mathcal{D}(n, 1, C, S_{\alpha})$ and two nodes $s,t \in V$,
the expected routing length for the first phase is
\begin{align*}
\E(R^{NBO}_1(s,t)) = \Omega\left( \log^{\alpha-1} n \right).
\end{align*} 
\end{lemma}
The proof of Lemma \ref{lem:phase1lower} is very similar to the one presented in 
\cite{fraigniaud09effect} and can be found in \cite{roos11analysis}.

In order to show the second result, some facts about $\mathcal{D}(n, 1, C, S_{\alpha})$
are needed. 
Recall that $l(u,v)$ denotes the event that there is a long-range link incident
to $v$ and $u$.

\begin{lemma}
\label{lem:sqrt}
The probability that a long-range link is at least of length $2\sqrt{n}$ is constant, i.e.
\begin{align*}
P(dist(u,v) \geq 2\sqrt{n} | l(u,v)) = \Omega\left(1\right).
\end{align*} 
\end{lemma}
\begin{proof}
First, consider that the probability that $u$ and $v$ is given by the following
(where $\gamma=\Theta(\log n)$ is the normalization constant in Section \ref{sec:model}):
\begin{align*}
\begin{split}
&P\left(l(u,v)|dist(u,v)=d\right)
= \sum_{l_1=1}^\infty \sum_{l_2=1}^\infty \left(1-e^{-\frac{l_1l_2}{d\gamma}}\right) P\left(S_{\alpha}=l_1\right)P\left(S_{\alpha}=l_2\right) \\
&= \sum_{l_1=1}^\infty \sum_{l_2=1}^\infty \Theta\left(\frac{l_1l_2}{d\gamma} P\left(S_{\alpha}=l_1\right)P\left(S_{\alpha}=l_2\right) \right)
= \Theta\left( \frac{1}{\gamma d} \E(S_{\alpha})^2 \right) = \Theta\left( \frac{1}{\gamma d}\right). 
\end{split}
\end{align*}
The last step holds since the expectation of $S_{\alpha}$ is constant. 
Note that
the probability that two randomly selected nodes on a ring of length $n$ have at least distance
$\sqrt{n}$ converges to 1. 
The claim now easily follows:
\begin{align*}
\begin{split}
&P\left(dist(u,v) \geq 2\sqrt{n} | l(u,v)\right) = \frac{P\left(l(u,v) | dist(u,v) \geq 2\sqrt{n}\right)} {P\left(l(u,v)\right)}P\left(dist(u,v) \geq 2\sqrt{n}\right) \\
&=  \Omega\left(\frac{P\left(l(u,v) | dist(u,v) \geq 2\sqrt{n}\right)} {P\left(l(u,v)\right)} \right) 
= \Omega\left( \frac{\sum_{d=2\sqrt{n}}^{n/2} \frac{1}{\gamma d}\cdot\frac{2}{n}}
{\sum_{d=1}^{n/2} \frac{1}{\gamma d}\cdot\frac{2}{n}} \right) 
= \Omega\left( \frac{\sum_{d=2\sqrt{n}}^{n/2} \frac{1}{d}}{\sum_{d=1}^{n/2} \frac{1}{d}} \right) \\
&= \Omega\left( \frac{\log (n/2) - \log (2\sqrt{n})}{\log (n/2)} \right) 
= \Omega\left( \frac{1/2\log n -4}{\log n } \right) 
= \Omega\left(1\right)
\end{split}
\end{align*}
The second last second step follows from $\sum_{i=1}^n \frac{1}{i} = \Theta(\log n)$.
\end{proof}

\begin{lemma}
\label{lem:expQ}
The expected number of nodes $Q$ in $V\setminus B_{\sqrt{n}}(t)$ that have a neighbor in
$B_{d}(t)$ for any $d < \sqrt{n}$ is 
\begin{align*}
E(Q) = \Omega \left(d \right).
\end{align*} 
\end{lemma}
\begin{proof}
The claim follows  from the fact that
$P(l(u,v)|dist(u,v)=d) = \Theta\left( \frac{1}{d\log n }\right)$
for any pair of nodes $(u,v)$.
\begin{align*}
\begin{split}
E(Q) 
&=\sum_{d_1=\sqrt{n}}^{n/2} \sum_{d_2=0}^{d} \Theta\left(\frac{1}{(d_1-d_2)\log n}\right)
+ \sum_{d_1=\sqrt{n}}^{n/2} \sum_{d_2=1}^{d} \Theta\left(\frac{1}{(d_1+d_2)\log n}\right) \\
&= \Omega\left( \sum_{d_1=\sqrt{n}}^{n/2} \sum_{d_2=1}^{d} \frac{2}{2d_1\log n} \right)
= \Omega\left( \sum_{d_1=\sqrt{n}}^{n/2} \frac{2d}{2d_1\log n}\right) 
= \Omega\left( d \right) 
\end{split}
\end{align*}
The last step uses $\sum_{d_1=\sqrt{n}}^{n/2} \frac{1}{d_1\log n} = \Omega(1)$ as shown in the proof
of Lemma \ref{lem:sqrt}.
\end{proof}
We can now derive a lower bound on $R^{NBO}_2$.

\begin{lemma}
\label{lem:phase2lower}
For a graph $G=(V, E) \in  \mathcal{D}(n, 1, C, S_{\alpha})$ with $C < \frac{1}{4}n^{1/4}$ and two nodes $s,t \in V$,
the expected routing length for the second phase is 
\begin{align*}
\E(R^{NBO}_2(s,t)) = \Omega\left(C \right).
\end{align*} 
\end{lemma}
\begin{proof}
The above bound is obtained by showing that $P(R^{NBO}_2(s,t) \geq C/4 | A) = \Omega\left( 1 \right)$
for a suitable event $A$ with $P(A) = \Omega\left( 1 \right)$. It follows directly that 
$\E(R^{NBO}_2(s,t))$ grows at least linearly with $C$.
Recall that $t$ has two local
neighbors $a^t_1$ and $b^t_1$ within distance $C$ of $t$.
The set of short-range neighbors of a node $u$ is denoted by $SN(u)$, whereas $LN(u)$ is
the set of long-range neighbors. Furthermore, we abbreviate $U=LN(t)\cup LN(a^t_1) \cup LN(b^t_1)$.
The event $A = A_1 \cap A_2 \cap A_3 \cap A_4$ is the intersection of the following events:
\begin{itemize} 
\item $A_1 = \{ v \notin \{t,a^t_1,b^t_1\} \}$: the first node within distance $C$ of $t$ is not $t$, $a^t_1$ or $b^t_1$
\item $A_2 = \{SN(t) = 2 \}$: $a^t_1$ and $b^t_1$ are $t$'s only short-range neighbors
\item $A_3 = \{|LN(t)| \leq 1 \} \cap \{ |LN(a^t_1)| \leq 1 \} \cap \{ |LN(b^t_1)| \leq 1 \}$: $t$ as well as its two short-range neighbors have maximally one long-range neighbor
\item $A_4 = \bigcup_{u \in U} \{ dist(u,t) \geq \sqrt{n} \}$: $t$ as well as its short-range neighbors have only long-range neighbors at distance at least $\sqrt{n}$ to $t$
 \end{itemize}
Before showing that $P(A) = \Omega\left( 1 \right)$, note that indeed $P(R^{NBO}_2(s,t) \geq C/4 | A) = \Omega\left( 1 \right)$.
\emph{NextBestOnce} increases the distance to $t$ by at most $C$ in each step, hence by conditioning on $A_3$ (and recalling that $C\cdot C/4 < \sqrt{n}$), $t$, $a^t_1$ and
$b^t_1$ can not be contacted by a long-range neighbor in less than $C/4$ steps.
Therefore, $t$ can only be found in less than $C/4$ steps if a node on the path contacts either $a^t_1$ or $b^t_1$ 
via a short-range link (by event $A_2$ and $A_3$).
The probability that a node $u \in \{a^t_1, b^t_1 \}$ is a short-range neighbor of a node $w$ on the routing path $X$
is 
\begin{align}
\label{eq:plocal}
\begin{split}
&P\left(u \in SN(w) | w \in X\right) 
= P\left(u \in \{a^w_1,b^w_1 \} \cup w \in \{a^u_1,b^u_1 \} | w \in X\right)\\
& \leq 2P\left(u \in \{a^w_1,b^w_1 \} | w \in X\right)  \leq 2P\left(u \in \{a^w_1,b^w_1 \}| dist(u,w)\leq C\right) = \frac{2}{C}.
 \end{split}
\end{align}
The second last step holds because the probability that two nodes are short-range neighbors
is maximal when their distance is at most $C$.

Applying an union bound, the probability that one of the 
first $C/4$ nodes on the path after reaching $v$ has an edge to either $a^t_1$ or $b^t_1$ is bounded by:
\begin{align}
\label{eq:lowbound}
P\left(R^{NBO}_2(s,t) \geq C/4 | A\right) = \Omega\left( \left( 1 - \frac{4}{C}\right)^{C/4} \right) = \Omega\left(1\right)
\end{align}
The last step holds, because $(1-1/x)^x$ converges to $1/e$ for $x \rightarrow \infty$.

It remains to show $P(A) = \Omega\left(1\right)$. Using independence of edge selection,
we can rewrite:
\begin{align*}
P(A) = P(A_1 \cap A_2 \cap A_3 \cap A_4) = P(A_1 | A_2 \cap A_3 \cap A_4)P(A_2)P(A_3)P(A_4|A_3)
\end{align*}
For determining $P(A_1 | A_2 \cap A_3 \cap A_4)$, we first define $W^d_L = \{ w \in V\setminus B_{d}(t): l(w,B_{C}(t)) \}$,
the set of all nodes with long-range links to a node at distance at least $d$ that have links into $B_{C}(t)$.
Similarly, let $W_S = \{ w \in V\setminus B_{C}(t): SN(w)\cap B_{C} (t) \neq \emptyset \}$ be the set of nodes with
short-range links into $B_{C}(t)$. Denote the predecessor of $v$ on the routing path by $w$.
We consider the complement of $A_1$ to derive the desired bound.
\begin{align}
\label{eq:a1}
\begin{split}
&P\left(A_1^{\bot} | A_2 \cap A_3 \cap A_4\right) 
\leq P\left(A_1^{\bot} \cap w \in W^{C}_L | A_2 \cap A_3 \cap A_4\right) 
+ P\left(A_1^{\bot} \cap w \in W_S | A_2 \cap A_3 \cap A_4\right) \\
&\leq P\left(A_1^{\bot} \cap w \in W^{\sqrt{n}}_L | A_2 \cap A_3 \cap A_4\right) 
+ P\left(A_1^{\bot} \cap w \in W_S | A_2 \cap A_3 \cap A_4\right) \\
&= P\left(A_1^{\bot} | A_2 \cap A_3 \cap A_4 \cap w \in W^{\sqrt{n}}_L\right)P\left(w \in W^{\sqrt{n}}_L | A_2 \cap A_3 \cap A_4\right)\\
 &+ P\left(A_1^{\bot} | A_2 \cap A_3 \cap A_4 \cap w \in W_S\right)P\left(w \in W_S | A_2 \cap A_3 \cap A_4\right) \\
&\leq  P\left(A_1^{\bot} | A_2 \cap A_3 \cap A_4 \cap w \in W^{\sqrt{n}}_L\right) + P\left(A_1^{\bot} | A_2 \cap A_3 \cap A_4 \cap w \in W_S\right) \\
&= \calO\left(\frac{3}{C} \right) + \calO\left( \frac{2}{C}\right)
\end{split}
\end{align}
If $w \in W^{\sqrt{n}}_L$, the short-range links of $t$ do not influence $A_1$. 
For this reason, we can drop the condition $A_2$ for the first term in Eq. \ref{eq:a1}.
By Lemma \ref{lem:expQ} there are $\Omega(C)$ nodes in $V\setminus B_{\sqrt{n}(t)}$ having edges into $B_{C}(t)$. Conditioning on $A_3$ and $A_4$, at most three of these long-range links are incidents to $t$, $a^t_1$ and $b^t_1$.
The second summand $\frac{2}{C}$ in Eq. \ref{eq:a1} is derived
as in Eq. \ref{eq:plocal}. Note that $A_3$ and $A_4$ do not influence the
event, given that $w \in W_S$.
Consequently,
\begin{align*}
P(A_1 | A_2 \cap A_3 \cap A_4)
= 1 - \calO\left(\frac{3}{C} + \frac{2}{C}\right)
= \Omega\left(1 - \frac{5}{C}\right)
= \Omega\left(1 \right)
\end{align*} 
$P(A_2)$ corresponds to the probability that none of the $2(C-1)$ potential short-range neighbors
but $a^t_1, b^t_1$ have chosen $t$ as a neighbor. So
\begin{align*}
P(A_2) = \left(1-\frac{1}{C}\right)^{2(C-1)} = \Omega(1). 
\end{align*}
Similarly to Eq. \ref{eq:lowbound}, the last bound follows from $(1-1/x)^{2x} \rightarrow e^{-2}$.
Long-range edges are selected independently, hence
\begin{align*}
\begin{split}
&P(A_3) 
= P(|LN(t)|\leq 1)^3 
= \Omega\left( \left(P\left(|LN(t)|\leq 1| l_t = 1\right)P\left(l_t=1\right)\right)^3 \right) \\
&= \Omega\left( \left(P\left(LN(t)|\leq 1| l_t = 1\right)P\left(S_{\alpha}=1\right)\right)^3 \right) 
= \Omega\left(P\left(|LN(t)|\leq 1| l_t = 1\right)^3 \right)
= \Omega\left(1\right) . 
\end{split}
\end{align*}
The second last step holds since $P(S_{\alpha}=k)\propto k^{\alpha}$. Furthermore,
the last steps follows from Eq. \ref{eq:expectation}, because an expected degree
of 1 implies that the probability of having a degree of at most 1 is at least 1/2.

For calculating $P(A_4|A_3)$, denote the long-range neighbor of $t$, $a^t_1, b^t_1$ by
$t_l$, $a_l$ and $b_l$, respectively. Because $C < \sqrt{n}$, it follows from $dist(a,a_l) > 2 \sqrt{n}$
that $dist(t,a_l) > \sqrt{n}$.
\begin{align*}
\begin{split}
&P(A_4|A_3)  \\
&=P\left(dist(t_l,t)>\sqrt{n}|l(t_l,t)\right)\cdot P\left(dist(a_l,t)>\sqrt{n}|l(a_l,a^t_1)\right)\cdot P\left(dist(b_l,t)>\sqrt{n}|l(b_l,b^t_1)\right)\\
&\geq P\left(dist(t_l,t)>2\sqrt{n}|l(t_l,t)\right)\cdot P\left(dist(a_l,a^t_1)>2\sqrt{n}|l(a_l,a^t_1)\right)\cdot P\left(dist(b_l,t)>2\sqrt{n}|l(b_l,b^t_1)\right) \\
&= P\left(dist(t_l,t)>2\sqrt{n}|l(t_l,t)\right)^3 
\end{split}
\end{align*}
Now $P(A_4|A_3)=  \Omega\left(1\right)$ is a direct consequence from Lemma \ref{lem:sqrt}.
The above results confirm that indeed
\begin{align*}
\begin{split}
P(A)  = P(A_1 | A_2 \cap A_3 \cap A_4)P(A_2)P(A_3)P(A_4|A_3) = \Omega(1\cdot 1 \cdot 1) = \Omega(1).
\end{split}
\end{align*} 

Thus, we have shown that
\begin{align*}
P\left(R^{NBO}_2(s,t) \geq C/4 \right)\geq P\left(R^{NBO}_2(s,t) \geq C/4 | A\right)P(A) = \Omega\left(1\right).
\end{align*}
Consequently, the expectation grows at least linearly in $C$, i.e.
\begin{align*}
\E(R^{NBO}_2(s,t)) = \Omega\left(C \right).
\end{align*}
\end{proof}
Theorem \ref{thm:nbolower} follows from Lemma \ref{lem:phase1lower} and \ref{lem:phase2lower}, because 
\begin{align*}
\E(R^{NBO}(s,t)) = \E(R_1(s,t)) + \E(R_2(s,t)) = \Omega(\log^{\alpha-1} n + C).
\end{align*}

\section{Proof of Theorem \ref{thm:non}}
\label{sec:nonbound}

Fix $1/2 \leq r \leq 1$ and $0\leq k \leq \alpha-2$. The routing is now split in two
phases: $R^{NoN}_1(s,t)$ gives the number of steps needed to get within distance
$e^{\log^r n}$ of $t$. $R^{NoN}_2(s,t)$ is the number of steps to cover the remaining distance.
For the proof, we assume that the maximum value of $S_{\alpha}$ is $\mu=\Theta\left(\log n\right)$. Restricting the degree is obviously a relaxation, which avoids further case distinctions. The result holds for an unbounded maximum degree as well, as presented
in \cite{roos11analysis}.
We show that
\begin{align*}
\begin{split}
&\E(R^{NoN}(s,t)) 
= \E(R^{NoN}_1(s,t)) + \E(R^{NoN}_2(s,t)) \\
&= \calO\left(\log^{\alpha-r - k(3-\alpha)} n \log \log n\right) +\calO\left(\log^{r(\alpha-1)} n \log \log n + C^3 \log n\right).
\end{split}
\end{align*}
The result is then obtained by finding $r_{min}$ and $k_{min}$ to minimize the above bound.

The bound for the second phase can be derived from the routing length of \emph{NextBestOnce}.

\begin{lemma}
\label{lem:phase2non}
For a graph $G=(V, E) \in  \mathcal{D}(n, 1, C, S_{\alpha})$, two nodes $s,t \in V$, and $1/2 \leq r \leq 1$,
the expected routing length of \emph{NextBestOnce-NoN} after reaching a node within distance 
$e^{\log^r n}$ of $t$ is
\begin{align*}
\E(R^{NoN}_2(s,t)) = \calO\left(\log^{r(\alpha-1)} n \log \log n + C^3 \log n\right).
\end{align*} 
\end{lemma}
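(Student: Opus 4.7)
The strategy is to reduce to Theorem~\ref{thm:nboupper} by treating the second phase as a routing instance of ``effective order'' $N := e^{\log^r n}$. Once the first phase terminates at some node $u$ with $d(u,t) \leq N$, what remains is exactly the kind of greedy routing problem handled by Theorem~\ref{thm:nboupper}, but with the source replaced by $u$ and the relevant distance scale compressed from $n$ down to $N$.

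First I would observe that \emph{NextBestOnce-NoN} is never slower than \emph{NextBestOnce} in terms of hops that make progress toward $t$: the candidate set considered at each step is a superset of the one-hop neighbours, so the greedy choice cannot decrease the progress achieved by the underlying \emph{NextBestOnce} rule. Hence an upper bound on the expected \emph{NextBestOnce} routing length from $u$ to $t$ immediately yields an upper bound on $\E(R_2^{NoN}(s,t))$.

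Second, I would apply Theorem~\ref{thm:nboupper} with $N$ in place of $n$. Since $\log N = \log^r n$, the dominant term $\calO(\log^{\alpha-1} N \cdot \log \log N)$ becomes $\calO(\log^{r(\alpha-1)} n \cdot \log \log n)$, using $\log \log N \leq \log \log n$. The additive term $C^2 \log^\epsilon N = C^2 \log^{r \epsilon} n$ is dominated by $C^2 \log^\epsilon n$ because $r \leq 1$. Summing the two contributions gives exactly the claimed bound.

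The main obstacle I expect is justifying the substitution $n \leftarrow N$, i.e.\ arguing that the bound of Theorem~\ref{thm:nboupper} is essentially \emph{local}, depending on $n$ only through the initial source--target distance. This requires checking that the ingredients of its proof --- the density of long-range links under the distribution $S_\alpha$, the maximum-degree assumption $M = \Theta(\log n)$, and the stuck-escape arguments producing the $C^2 \log^\epsilon n$ term --- continue to hold when the analysis is restricted to a ball of radius $e^{\log^r n}$ around $t$. Provided these quantities scale with the local radius rather than the global size of $G$, the rescaling goes through and the lemma follows.
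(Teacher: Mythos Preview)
Your approach is essentially the paper's: observe that \emph{NextBestOnce-NoN} dominates \emph{NextBestOnce}, then invoke Theorem~\ref{thm:nboupper} starting from the first node $u$ with $d(u,t)\le e^{\log^r n}$. The paper sidesteps the obstacle you flag in your last paragraph by not substituting $N$ for the global $n$ at all --- it simply evaluates the distance-dependent part of the Theorem~\ref{thm:nboupper} bound at $e^{\log^r n}$ (giving $\log^{\alpha-1}e^{\log^r n}=\log^{r(\alpha-1)} n$) while leaving the additive $C^2\log^{\epsilon} n$ term, which depends on the ambient graph rather than the starting distance, unchanged.
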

\begin{proof}
\emph{NextBestOnce-NoN} is in expectation at least as fast as \emph{NextBestOnce}, using
the same procedure, only with additional information.
Let $u$ be the first node on the routing path with $dist(t,u) \leq e^{\log^r n}$. By
Theorem \ref{thm:nboupper}, the expected routing length to get from $u$ to $t$ is: 
\begin{align*}
\begin{split}
\E(R^{NoN}_2(s,t)) 
= \calO(\E(R^{NBO} (u,t)))
&= \calO\left( \log^{\alpha-1} e^{\log^r n} \log \log e^{\log^r n} +  C^3 \log n \right) \\
&= \calO\left( \log^{r(\alpha-1)} n \log \log n +  C^3 \log n \right)
\end{split}
\end{align*}
This proves the claim.
\end{proof}
The first phase of the routing is considerable more work.
In the following, assume $C < e^{log^{1/2} n}$. 
Otherwise, the bound $\calO\left( C^3 \log n \right)$ holds
for both \emph{NextBestOnce} and \emph{NextBestOnce-NoN} by Theorem \ref{thm:nboupper}.
A preliminary Lemma is needed to determine the probability that nodes are adjacent given their labels.

\begin{lemma}
\label{lem:edgeprob}
Consider a node $u$ with $d = dist(u,t) > e^{\log^{1/2} n}$, and a set $W \subset V$, so that $dist(r,t) > d$ for all $w \in W$. 
Denote by $V^{a}_{d'} = \{v \in V: v \in B_{d'}(t), l_v \geq a \}$ the set of all nodes within distance $d' \leq d$ of the destination and label at least $a$.
Furthermore, assume $\frac{|W|M}{(d-d')\log n} < 1/2$.
The probability that $u$ is adjacent to a node in $V^a_{d'}$, conditioned on $l_u$ 
and the absence of edges between $W$ and $B_{d'}(t)$, 
is bounded by
\begin{align*} 
\begin{split} 
P\left(l(u, V^a_{d'}) | l_u = l \cap l(W,B_{d'}(t))^{\bot}\right) = 
 \Omega\left(\frac{l}{\log n}[\log(d+d'-1)-\log(d-d'+1)]a^{2-\alpha}\right).
\end{split}
\end{align*}
\end{lemma}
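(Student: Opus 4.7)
The plan is: (i) compute the unconditional probability $P(l(u,V^a_{d'})\mid l_u)$ by summing per-edge probabilities, and (ii) show that additionally conditioning on $l(R,V^a_{d'})^\bot$ only costs a constant factor under the hypothesis $|R|/((d-d')\log n) < 1/2$.

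For (i), the edge model underlying $\mathcal{D}(n,1,C,S_\alpha)$ is of Chung--Lu / scale-free Kleinberg type, so given labels and positions an edge appears independently of other edges with single-edge probability of order $l_u l_v / (dist(u,v)\log n)$. Every $v \in V^a_{d'}$ satisfies $dist(u,v) \in [d-d', d+d']$ by the triangle inequality at $t$. On the one-dimensional label space the density of nodes at distance $x$ from $u$ is constant up to factors, the power-law assumption with exponent $\alpha$ gives a density $\Theta(a^{1-\alpha})$ of nodes of degree at least $a$, and the expected label of such a node is $\Theta(a)$. Summing per-edge probabilities then yields
\begin{align*}
\sum_{v\in V^a_{d'}}\frac{l_u\,l_v}{dist(u,v)\log n}
=\Omega\!\left(\frac{l_u\,a^{2-\alpha}}{\log n}\int_{d-d'}^{d+d'}\!\frac{dx}{x}\right)
=\Omega\!\left(\frac{l_u\,a^{2-\alpha}}{\log n}[\log(d+d')-\log(d-d')]\right).
\end{align*}
Because $d > e^{\log^{1/2} n}$, each per-edge probability is $o(1)$, so the union bound is tight up to a constant and this sum is itself a lower bound on $P(l(u,V^a_{d'})\mid l_u)$.

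For (ii), fix all labels: the edges counted by $l(u,V^a_{d'})$ are disjoint from those in $R \times V^a_{d'}$, so by conditional independence the event $l(R,V^a_{d'})^\bot$ does not touch the per-edge probabilities between $u$ and $V^a_{d'}$. The only residual effect of the conditioning is to bias the posterior on the labels in $V^a_{d'}$ (edge absences slightly favour smaller labels). The hypothesis $|R|/((d-d')\log n) < 1/2$ is precisely what bounds the prior expected number of $R$-$V^a_{d'}$ edges by a constant, so by direct likelihood computation the posterior label distribution on $V^a_{d'}$ is shifted by at most a constant multiplicative factor, and the unconditional lower bound from (i) survives with the same asymptotic order.

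The main obstacle is step (ii): making precise the statement that conditioning on the joint absence of $|R|\cdot|V^a_{d'}|$ weakly-dependent Bernoulli events only perturbs the marginal label distribution on $V^a_{d'}$ by a constant, and pinning this perturbation to the hypothesis on $|R|$. Step (i) is essentially a one-shot integration once the per-edge formula and the power-law density of high-degree nodes are in hand; the trigonometry is trivial because the relevant distances collapse to the interval $[d-d',d+d']$.
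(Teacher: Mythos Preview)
Your approach is essentially the paper's: both compute the conditional expected number $E(Q)$ of qualifying neighbours and convert to a probability via $P(Q\ge 1)\ge 1-e^{-E(Q)}=\Theta(E(Q))$. Your step~(i) integral is the same computation as the paper's double sum $\sum_{v\in B_{d'}}\sum_{j\ge a} j^{-\alpha}\cdot \Theta\bigl(\tfrac{l_u j}{dist(u,v)\log n}\bigr)$; you just collapse the inner sum into ``density $\Theta(a^{1-\alpha})$ of high-degree nodes times expected label $\Theta(a)$'', which is equivalent.

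Your step~(ii) reaches the right conclusion but the stated reason is wrong. The hypothesis $\tfrac{|R|}{(d-d')\log n}<\tfrac12$ does \emph{not} bound the total expected number of $R$--$V^a_{d'}$ edges by a constant: $|V^a_{d'}|=\Theta(d'\,a^{1-\alpha})$ is typically large, so that total is $\Theta\bigl(d'\,a^{1-\alpha}\tfrac{|R|}{(d-d')\log n}\bigr)$, which need not be $O(1)$. What the hypothesis actually controls is the \emph{per-vertex} quantity: for each fixed $v\in B_{d'}$ the probability that $v$ links into $R$ is $\calO\bigl(\tfrac{E(L)\,|R|}{(d-d')\log n}\bigr)$, hence $P(l(R,v)^\bot)=\Omega(1)$. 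The paper then applies Bayes vertex-by-vertex,
\[
P(l_v=j\mid l(R,v)^\bot)=\frac{P(l(R,v)^\bot\mid l_v=j)}{P(l(R,v)^\bot)}\,P(l_v=j)=\Omega(j^{-\alpha}),
\]
and sums over $v$ and $j$. So the conditioning argument is local, not global; once you replace your ``total edge count'' sentence with this per-node Bayes step, your sketch coincides with the paper's proof.
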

\begin{proof}
We show that the expected number $Q$ of nodes $v$ in $V^a_{d'}$ that have a link to $u$ is
\begin{align*}
E(Q) = \Omega\left(\frac{l}{\log n}[\log(d+d'-1)-\log(d-d'+1)]a^{2-\alpha}\right).
\end{align*}
Then the probability of $l(u, V^a_{d'})$ satisfies this bound as well. \\
Proof of the last statement: For each $v \in B_{d'}$, the random variable $Q_v$ is 1 if $v \in V^a_{d'}$ and adjacent to $u$.
Otherwise $Q_v$ is 0. For the sum $Q = \sum_{v \in B_{d'}} Q_v$, it holds that 
\begin{align*}
P(Q=1) \geq  1 - e^{-E(Q)} = \Theta\left(E(Q)\right) \textnormal{ for } E(Q) \rightarrow 0
\end{align*} 
The first inequality follows from the fact that the $Q_v$ are independent and
$1-x \leq e^{-x}$ for $x \in [0,1]$:
\begin{align*}
\begin{split}
P(Q = 0) &= \prod_{i=1}^r P(Q_i=0) = \prod_{i=1}^r (1-E(Q_i)) 
 \leq \prod_{i=1}^r e^{-E(Q_i)} = e^{-E(Q)}
\end{split}
\end{align*}

In the following, $E(Q)$ is computed as the sum of probabilities that each node belongs to the set of neighbors
of $u$ within $V^a_{d'}$. Furthermore, note that in case of a one-dimensional ID space (, i.e. a ring), a node with distance
at most $d'-1$ to $t$ has a distance between $d-d'+1$ and $d+d'-1$ to $u$.
\begin{align}
\label{eq:eq1}
\begin{split}
E(Q) 
&= \sum_{i=d-d'+1}^{d+d'-1} P\left(l(u,v) \cap v \in V^a_{d'} | l_u = l \cap l(W,B_{d'}(t))^{\bot} \cap v \in B_{d'}(t) \cap dist(v,u)=i\right) \\
&=\sum_{i=d-d'+1}^{d+d'-1} \sum_{j=a}^{\infty} P\left(l(u,v) \cap l_v=j | l_u=l \cap l(W,B_{d'}(t))^{\bot} \cap v \in B_{d'}(t) \cap dist(v,u)=i \right)\\
&= \sum_{i=d-d'+1}^{d+d'-1} \sum_{j=a}^{\infty} P\left(l_v=j | l_u=l \cap l(W,B_{d'}(t))^{\bot} \cap v \in B_{d'}(t) \cap dist(v,u)=i\right)\\
& \hspace{2.3cm} \cdot P\left(l(u,v) | l_u=l\cap l(W,B_{d'}(t))^{\bot}  \cap v \in B_{d'}(t) \cap dist(v,u)=i \cap l_v=j \right)\\
&:= \sum_{i=d-d'+1}^{d+d'-1} \sum_{j=a}^{\infty} t_1 \cdot t_2
\end{split}
\end{align}
Recall from Eq. \ref{eq:LRtilde} in Section \ref{sec:model} that two nodes $v$, $w$ are adjacent
with probability
\begin{align}
\label{eq:connect}
P(l(v,w)|dist(v,w)=i \cap l_v=l_1 \cap l_w=l_2)
= 1 - e^{-\frac{l_1l_2}{i\gamma }} 
= \Theta\left(\frac{l_1l_2}{i\gamma }\right)
= \Theta\left(\frac{l_1l_2}{i\log n}\right).
\end{align}
The last steps holds due to Eq. \ref{eq:expectation}.
A scale-free degree distribution $S_{\alpha}$ as defined in Eq. \ref{eq:salpha} is used, i.e. the probability that a node $v$ has label $l$ is proportional to $l^{-\alpha}$.
The probability that a node $v \in B_{d'}$  is adjacent to a node in $W$ is obtained
by a simple union bound.
\begin{align*}
\begin{split}
&P\left( l(v,W) | l_v = l \cap dist(t,W) > d \cap v \in B_{d'}(t) \cap dist(v,u)=i\right)\\
&\leq \sum_{w \in W} P\left( l(v,w)| l_v = l \cap dist(t,W) > d \cap v \in B_{d'}(t) \cap dist(v,u)=i\right) \\
&= \sum_{w \in W} P\left( l(v,w)| dist(v,w) > d-d' \cap l_v = l \cap dist(v,u)=i\right)\\
&= \Theta\left( \frac{l|W|}{(d-d')\log n} \right)
\end{split}
\end{align*} 
The last step holds by Eq. \ref{eq:connect}.
Because the expectation $E(S_{\alpha})$ is constant, we get:
\begin{align*}
\begin{split}
&P\left( l(v,W) | dist(t,W) > d \cap v \in B_{d'}(t) \cap dist(v,u)=i\right)\\
&= \sum_{l=1}^{\infty} P\left( l(v,W) | l_v = l \cap dist(t,W) > d \cap v \in B_{d'}(t) \cap dist(v,u)=i\right)
P(l_v=l) \\
&= \Theta\left( \sum_{l=1}^{\infty} \frac{l|W|}{(d-d')\log n} P(l_v=l) \right)\\
&= \Theta\left( \frac{|W|}{(d-d')\log n}  \right)
\end{split}
\end{align*} 
By assumption, $\frac{|W|M}{(d-d')\log n} < 1/2$, and hence both
$P(l(W,v)^{\bot}| v \in B_{d'}(t) \cap dist(v,u)=i) = \Omega\left(1\right)$ and
$P\left(l(W,v)^{\bot} | l_v=j \cap v \in B_{d'}(t) \cap dist(v,u)=i\right)= \Omega\left(1\right)$.
Given that labels and edges are chosen independently, $t_1$ is easily obtained as:
\begin{align*}
\begin{split}
t_1 &= P\left(l_v = j| l(W,v)^{\bot}\cap v \in B_{d'}(t) \cap dist(v,u)=i\right) \\
&= \frac{P\left(l(W,v)^{\bot} | l_v=j \cap v \in B_{d'}(t)\cap dist(v,u)=i\right)P\left(l_v=j \cap v \in B_{d'}(t)\cap dist(v,u)=i\right)}
{P\left(l(W,v)^{\bot}|v \in B_{d'}(t)\right)P\left(v \in B_{d'}(t)\cap dist(v,u)=i\right)} \\
&= \frac{P\left(l(W,v)^{\bot} | l_v=j \cap v \in B_{d'}(t)\cap dist(v,u)=i\right)P\left(l_v=j \right)P\left( v \in B_{d'}(t)\cap dist(v,u)=i\right)}
{P\left(l(W,v)^{\bot}|v \in B_{d'}(t) \cap dist(v,u)=i\right)P\left(v \in B_{d'}(t)\cap dist(v,u)=i\right)} \\
&= \Omega\left(P(l_v = j)\right) 
= \Omega\left(j^{-\alpha}\right)
\end{split}
\end{align*}
Since edges are chosen independently, the event $l(W,B_{d'}(t))^{\bot}$ does not influence $t_2$. 
So $t_2$ is a consequence from Eq. \ref{eq:connect}.
\begin{align*}
\begin{split}
t_2 &=  P\left(l(u,v) | l_u=l \cap l_v=j \cap dist(u,v)=i\right) 
= \Theta\left(\frac{l\cdot j}{i\gamma}\right)
\end{split}
\end{align*}
Replacing $t_1$ and $t_2$ in Eq. \ref{eq:eq1}, we obtain the desired result:
\begin{align*}
\begin{split}
  E(Q) 
&= \sum_{i=d-d'+1}^{d+d'-1} \sum_{j=a}^{\infty} t_1 \cdot t_2 \\
&= \sum_{i=d-d'+1}^{d+d'-1} \sum_{j=a}^{\infty} \Omega\left( \frac{l}{i \log n} j^{1-\alpha} \right) \\
&= \Omega\left( \sum_{i=d-d'+1}^{d+d'-1} \frac{l}{i\log n}  \sum_{j=a}^{\infty} j^{1-\alpha}\right) \\
&= \Omega\left( \sum_{i=d-d'+1}^{d+d'-1} \frac{l}{i\log n}  \int_{a}^{\infty} x^{1-\alpha} dx \right) \\
&= \Omega\left( \sum_{i=d-d'+1}^{d+d'-1} \frac{l}{i \log n}  a^{2-\alpha} \right) \\
&= \Omega\left( \frac{l}{\log n}[\log (d+d'-1) - \log (d-d'+1)]  a^{2-\alpha} \right)
\end{split}
\end{align*}
This shows that the expected number of neighbors and hence the probability to have one neighbor
within the desired set is indeed 
\begin{align}
P\left(l(u, V^a_{d'}) | l_u=l \cap l(W,B_{d'}(t))^{\bot}\right)
=\Omega\left( \frac{l}{\log n}[\log (d+d'+1) - \log (d-d'-1)] a^{2-\alpha} \right)
\end{align}
as claimed.
\end{proof}

In the following, we model the routing process as a sequence $X_1,X_2,\ldots $, such that
$X_i$ gives the distance of the closest neighbor of the $i$-th node on the path
to $t$. 
The distance of the closest neighbor to $t$ decreases in each step
until a node within distance $C$ is reached. This
cannot be guaranteed for the nodes on the actual path. A node at a higher
distance might be chosen if it has neighbors that are close to the destination.
The monotone decrease of the sequence $X_i$ allows us to make use of the
following Lemma:

\begin{lemma}
\label{lem:xi}
If $X_0,X_1,...$ is a non-negative, integer-valued random process with $X_0 >\lambda \geq 0$, such that for all $d$ with $\lambda < d \leq X_0$
\begin{align*}
P\left(X_{i+2} \leq d/2 |X_0,\ldots ,X_i = j\right) = \Omega\left(\frac{\log d}{\rho}\right) 
\end{align*}
then the expected number of steps until the random process reduces to at most $\lambda $ is $\calO(\rho \log \log X_0)$
\end{lemma}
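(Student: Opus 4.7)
The plan is to partition the evolution of $X_i$ into geometric phases and bound the expected time spent in each phase separately. For $k \geq 0$, let phase $k$ consist of those steps where $X_i$ lies in the interval $(X_0/2^{k+1}, X_0/2^k]$, and set $d_k = X_0/2^k$. Since the preceding discussion guarantees that the $X_i$ are non-increasing, the phases are visited in order and each is exited at most once, so the total expected runtime decomposes as the sum of the expected durations of the individual phases, continuing until we reach the first phase whose upper endpoint has dropped to at most $\lambda$.

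Inside phase $k$, apply the hypothesis with $d = d_k$: the probability that $X_{i+2} \leq d_k/2$, which ends the phase, is $\Omega(\log d_k / \rho)$. Pairing up consecutive steps, the number of two-step blocks until the process exits phase $k$ is stochastically dominated by a geometric random variable with success probability $\Omega(\log d_k/\rho)$, whose expectation is $\calO(\rho/\log d_k)$; hence the expected number of individual steps within phase $k$ is also $\calO(\rho/\log d_k)$.

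Summing across phases $k = 0, 1, \ldots, K-1$ with $K = \calO(\log(X_0/\lambda))$ and using $\log d_k = \log X_0 - k \log 2$, the total expected number of steps is
\[
\sum_{k=0}^{K-1} \calO\!\left(\frac{\rho}{\log X_0 - k \log 2}\right)
= \calO\!\left(\rho \sum_{\ell = \Omega(\log \lambda)}^{\log X_0} \frac{1}{\ell}\right)
= \calO(\rho \log \log X_0),
\]
where the last step uses the standard harmonic estimate $\sum_{\ell=m}^{M} 1/\ell = \Theta(\log M - \log m)$ and the fact that $\log \log X_0$ dominates $\log \log \lambda$ (or that we simply drop all phases below $\lambda$ per the hypothesis).

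The main subtlety I anticipate is handling the two-step conditioning cleanly across phase boundaries. A rigorous way is to introduce stopping times $\tau_k = \min\{i : X_i \leq d_k/2\}$, argue by induction on $k$ that $\E[\tau_{k+1} - \tau_k] = \calO(\rho/\log d_k)$ via the geometric-domination argument above (applied to the Markov-style hypothesis that gives a uniform-in-history lower bound on the success probability), and then appeal to linearity of expectation to sum the phase costs. A secondary point of care is that the hypothesis only applies for $d > \lambda$, so the induction stops once $d_k$ has fallen to $\lambda$; this cutoff is exactly what forces a $\log\log X_0$ factor in the final sum rather than a full $\log X_0$.
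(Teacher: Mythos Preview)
Your argument is correct and follows the standard route for such statements: partition into dyadic phases, bound the duration of each phase by a geometric waiting time using the uniform-in-history halving probability, and sum the resulting $\calO(\rho/\log d_k)$ terms as a harmonic series to extract the $\log\log X_0$ factor. The paper does not give its own proof of this lemma but simply defers to Lemma~5.2 of \cite{fraigniaud09effect}, which carries out essentially the same phase-plus-harmonic-sum argument you outline.
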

A proof can be found in \cite{fraigniaud09effect}, Lemma 5.2.

Let $R_i$ denotes the set of all nodes on the path before the $i$-th node and their neighbors.
All events need to be conditioned
on the fact that no node within distance $d=X_i$ has a link to a node in $R_i$, i.e. the event $l(B_d,R_i)^{\bot}$.
The next result is the main part of the proof enabling the use of Lemma \ref{lem:xi} with $\lambda = e^{\log^r n}$, $\rho = \log^{\alpha-r-k(3-\alpha)} n$.

\begin{lemma}
\label{lem:step}
Let $X_i$ be the distance of the closest neighbor of the $i$-th node on the routing path, $1/2 \leq r \leq 1$, 
$0 \leq k \leq \alpha-2$, and $|R_i| < 1/2\sqrt{d}\log n$. The chance that $X_i$ is halved in the next 
two steps is:
\begin{align*}
P\left(X_{i+2} \leq \frac{d}{2} | R_i \cap X_i = d\right) = \Omega\left(\frac{\log d\cdot\log^{r+k(3-\alpha)} n }{\log^\alpha n}\right)
\end{align*}
\end{lemma}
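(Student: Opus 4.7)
The plan is to witness the event $X_{i+2}\le d/2$ by exhibiting, with the claimed probability, a two-hop path from the closest neighbor of the current node to some node within distance $d/2$ of $t$. The intermediate node on the path is a high-degree ``bridge'' of degree at least $\log^k n$, so that the parameter $k$ controls how much the bridge amplifies the probability of reaching the neighborhood of $t$ in the next step.

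Let $u$ be the closest neighbor of the $i$-th node on the routing path, so $dist(u,t)=d$, with label $l_u$. I would carry out the argument in two stages, each being an application of Lemma~\ref{lem:edgeprob}, and then multiply the resulting probabilities.

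Stage one. Apply Lemma~\ref{lem:edgeprob} to $u$ with degree threshold $a=\log^k n$ and a distance parameter $d'$ chosen close enough to $d$ that $\log(d+d')-\log(d-d')=\Theta(\log d)$. Using the label $l_u=\Theta(\log^r n)$ associated with the current node of the sequence $X_i$, this produces a bridge neighbor $w$ of $u$ with $w\in V^{\log^k n}_{d'}$, with probability $\Omega\!\left(\frac{\log^r n\cdot\log d\cdot\log^{k(2-\alpha)}n}{\log n}\right)$.

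Stage two. Conditional on the existence of such a $w$ of label at least $\log^k n$, apply Lemma~\ref{lem:edgeprob} again, this time to $w$, with $a=1$ and distance parameter $d/2$, to obtain a neighbor of $w$ in $V^1_{d/2}$ with probability $\Omega\!\left(\frac{\log^k n}{\log n}\right)$. Because edge indicators are independent once the labels are fixed, the two probabilities multiply. The resulting product has $\log n$-exponent at least $r+k(3-\alpha)-2$, which, since $\alpha\ge 2$, dominates the target exponent $r+k(3-\alpha)-\alpha$ and so gives the claimed bound once the $\log d$ factor from stage one is retained.

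The main obstacle is the careful management of the conditioning event $l(R,V^a_{d'})^\bot$ built into Lemma~\ref{lem:edgeprob}. I must check the hypothesis $|R_i|/((d-d')\log n)<1/2$ at both applications, and argue that augmenting $R_i$ by the bridge's neighborhood in stage one does not invalidate the hypothesis for stage two. The independence used to multiply the two probabilities follows from the independence of edges in the model, but must be made precise in the presence of the enlarged conditioning and of the random label of $w$.
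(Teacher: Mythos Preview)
Your proposal has a genuine gap. The claim that $l_u=\Theta(\log^r n)$ is unfounded: labels are drawn i.i.d.\ from the power law $P(L=j)\propto j^{-\alpha}$ and have nothing to do with $r$. The parameter $r$ enters the statement only through the standing assumption $d>e^{\log^r n}$ of phase~1, i.e.\ $\log d\ge\log^r n$. If you replace $l_u$ by the correct worst case $l_u\ge 1$, your Stage~1 gives only $\Omega\!\big(\tfrac{\log d\,\log^{k(2-\alpha)}n}{\log n}\big)$, and the product of your two stages is $\Omega\!\big(\tfrac{\log d\,\log^{k(3-\alpha)}n}{\log^{2}n}\big)$, which dominates the target $\tfrac{\log d\,\log^{r+k(3-\alpha)}n}{\log^{\alpha}n}$ only when $\alpha-2\ge r$, hence not in general.

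The paper recovers the missing $\log^r n$ by inserting one more link: it builds a chain $u\to v\to w\to B_{d/2}$ with $l_v\ge\log^k n$ and $l_w\ge\log n$. The hop $v\to w$ (both endpoints in $B_\Delta$) contributes a \emph{second} factor $\log d$, and this is where $\log d\ge\log^r n$ is used; the final hop $w\to B_{d/2}$ then costs only $\Omega(1)$ because $l_w\ge\log n$. Two further points your outline omits: first, one must argue that the routing actually selects $v$, which forces a case split on $l_u\lessgtr\log^k n$ (when $l_u\ge\log^k n$, $u$ already plays the role of $v$ and one hop is saved); second, the paper takes $u$ to be the $i$-th node itself, not its closest neighbor, and conditions on $d+\sqrt d\le dist(u,t)\le 2d$ (an event of probability $\Omega(1)$) so that Lemma~\ref{lem:edgeprob} applies with the required control on $|R_i|/((d-d')\log n)$.
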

\begin{proof}
Let $u$ be the $i$-th node
on the path. We show the result by distinguishing two cases: $l_u < \log^k n$ and $l_u \geq \log^k n$. 
But before, a case-independent observation is made.

Note that though the distance of a neighbor
of $u$ to $t$ is known, the distance $\Delta$ of $u$ is not given. We bound all the
following probabilities on the event $G = \{d+\sqrt{d} \leq \Delta \leq 2d \} $
The first inequality is necessary to apply Lemma \ref{lem:edgeprob} with
$\frac{|R_i|}{(d+\sqrt{d}-d)\log n} <1/2$. 
The bound $\Delta \leq 2d$ ensures that the $dist(u,t)$ needs to be maximally
quartered to have $X_{i+2} \leq d/2$.
For a lower bound on the event $A$ of halving the distance, $P(A)\geq P(A|G)P(G)$ can be applied.
If $P(G) = \Omega(1)$, $P(A)=\Omega\left( P(A|G)\right)$ holds.
It remains to show $P(G) = \Omega\left(1\right)$. 
The lower bound $\Delta \geq d+ \sqrt{d}$ holds with probability $\Theta\left(1\right)$ 
by Lemma \ref{lem:sqrt}.
The upper bound $\Delta \leq 2d$ holds with probability $\Omega\left(1\right)$
as well, as can be seen from the proof of Theorem 2.4 in \cite{fraigniaud09effect}: The probability
that an arbitrary node has a neighbor at half its distance to the destination is shown to be
$\calO\left(\frac{1}{\log^{\epsilon} n}\right)$ for some $\epsilon > 0$. Thus, the probability of not having such a neighbor
is $\Omega\left(1-\frac{1}{\log^{\epsilon} n}\right) = \Omega\left(1\right)$, because $\frac{1}{\log^{\epsilon} n} < 1/2$
for $n$ big enough.

This concludes our case-independent observation, ensuring that indeed $P(G) = \Omega(1)$.
In both cases, $l_u < \log^k n$, and $l_u \geq \log^k n$, we first describe
an event leading to halving the distance, before formally deriving
the probability of the respective event.

Assume $l_u < \log^k n$. The following events result in $X_{i+2}\leq d/2$:
\begin{itemize}
\item a neighbor $v \in B_\Delta (t)$ of $u$ has label $l_v \geq \log^k n$ .
\item $v$ has a neighbor $w \in B_\Delta (t)$ with label $l_w \geq \log n$
\item $w$ has a link into $B_{d/2}$
\item $v$ is the node $u$ chooses as the next node on the routing path, denote this event by $\left\{ Z = v \right\}$
\end{itemize}
All events are conditioned on $F=l(B_d,R_i)^{\bot}) \cap l_u \leq \log^k n \cap G$.
Formally, the probability is determined by:
\begin{align}
\label{eq:logk}
\begin{split}
  &P\left(l(u, V^{\log^k n}_{\Delta}) \cap l(v,V^{\log n}_{\Delta}) \cap l(w,B_{d/2})\cap Z = v | F\right) \\
  &=P\left(l(u, V^{\log n}_{\Delta}) | F\right)\cdot  
  P\left(l(v,V^{\log n}_{\Delta})| l(u, V^{\log^k n}_{\Delta}) \cap F\right)  \\
  & \hspace{0.2cm} \cdot P\left(l(w,B_{d/2}) | l(v,V^{\log n}_{\Delta}) \cap l(u, V^{\log^k n}_{\Delta})\cap F\right) \\  
  & \hspace{0.2cm} \cdot P\left(Z = v | l(w,B_{d/2}) \cap l(v,V^{\log n}_{\Delta})\cap l(u, V^{\log^k n}_{\Delta})\cap F\right) \\
  & := q_1q_2q_3q_4
\end{split}
\end{align}
We now subsequently bound $q_1$, $q_2$, $q_3$ and $q_4$.
$q_1$ can be derived using Lemma \ref{lem:edgeprob} with $d=d'=\Delta$ and the fact that the probability of having a link
is minimal for a node $u$ with $l_u=1$.
\begin{align*}
\begin{split}
q_1 &= P\left(l(u, V^{\log^k n}_{\Delta})  | l(B_d,R_i)^{\bot} \cap l_u \leq \log^k n \cap G \right) \\
&\geq P\left(l(u, V^{\log^k n}_{\Delta})  | l(B_d,R_i)^{\bot} \cap l_u=1 \cap G \right) \\
&=\Omega\left(\frac{1}{\log n}[\log (2\Delta-1)-0]\log^{k(2-\alpha)} n \right) \\
&=\Omega\left(\frac{\log d}{\log n}\log^{k(2-\alpha)} n \right) 
\end{split}
\end{align*}
The last step uses $d \leq \Delta \leq 2d$.
Since links are selected independently, the events $l(u, V^{\log^k n}_{\Delta})$
and $l(v,V^{\log n}_{\Delta})$ are independent, but $v \in V^{\log^k n}_{\Delta}$
influences $l(v,V^{\log n}_{\Delta})$.
The maximal distance $v$ can have to $t$ is $\Delta $.
Because labels are selected independently $l_u \leq \log^k n$ does not influence $q_2$ or $q_3$.
Hence $q_2$ is derived similarly to $q_1$:
\begin{align*}
\begin{split}
q_2 &= P\left(l(v, V^{\log n}_{\Delta})  | l(B_d,R_i)^{\bot} \cap l_v \geq \log^k n \cap G\right) \\
&\geq P\left( l(v, V^{\log n}_{\Delta})  | l(B_d,R_i)^{\bot} \cap l_v=\log^k n \cap G\right) \\
&=\Omega\left(\frac{\log^k n}{\log n}[\log (2\Delta-1)-0]\log^{(2-\alpha)} n \right) \\
&=\Omega\left(\frac{\log d}{\log n}\log^{k} n \log^{(2-\alpha)} n \right) 
\end{split}
\end{align*}
Note that $B_{d/2} = V^1_{d/2}$, so Lemma \ref{lem:edgeprob} is applied
to determine $q_3$ as well.
Furthermore, the function $\log (x+d/2) - \log (x-d/2)$, being a monotone decreasing function
for $x > d/2$, assumes its maximum in the interval $[d,2d]$ at $\Delta=d$.
\begin{align*}
\begin{split}
q_3 &= P\left(l(w, B_{d/2})  | l(B_d,R_i)^{\bot} \cap l_w \geq \log n\cap G\right) \\
&\geq P\left(l(w, B_{d/2})  | l(B_d,R_i)^{\bot} \cap l_w=\log n\cap G\right) \\
&=\Omega\left(\frac{\log n}{\log n}[\log (\Delta+d/2-1)-\log (\Delta-d/2+1)] \right) \\
&=\Omega\left(\log (d+d/2-1) - \log(d-d/2+1) \right) \\
&= \Omega\left(1\right)
\end{split}
\end{align*}
We are considering the case when u has  $\calO\left( \log^k n \right)$ neighbors at distance at least d from t. Note that the probability 
that one of $\log^k n$ arbitrary nodes link to a certain node is asymptotically the same 
as considering one node with degree $\log^k n$. 
Hence the probability that $v$ or any one of $u$'s remaining neighbors has the closest neighbor to $t$ are of the same order, i.e. $q_4=\Omega\left(1\right)$.

Combining the results for the individual terms, we get a bound for halving the distance
in case of $l_u \leq \log^k n$.
\begin{align*}
\begin{split}
&P(X_{i+2} \leq \frac{d}{2} | R_i \cap X_i = d) \\
&= \Omega\left(q_1q_2q_3q_4\right) \\
&= \Omega\left(\frac{\log d}{\log n}\log^{k(2-\alpha)} n \cdot \frac{\log d}{\log n}\log^{k} n \log^{(2-\alpha)} n\right) \\ 
&= \Omega\left(\frac{\log d \log^{k(3-\alpha)} n }{\log^\alpha n} \log d \right) \\
&= \Omega\left(\frac{\log d \log^{k(3-\alpha)} n}{\log^\alpha n} \log^r n \right)
\end{split}
\end{align*}
The last step uses that $\log d > \log e^{\log^r n} = \log^r n$.
Having shown the result for the case $l_u < \log^k n$, $l_u \geq \log^k n$ has to be treated differently,
because $q_4$ cannot be bounded as above. However, this case is easier, since one does not need to contact a
node with label at least $\log^k n$ first.
Here we consider the following events:
\begin{itemize}
\item $u$ has a neighbor $w$ within distance $\Delta$ of $t$ with degree at least $\log n$
\item $w$ links into $B_{d/2}$
\end{itemize}
Since $u$ knows the identifiers of $w$'s neighbors, it will select $w$, unless there is some other node being
both a neighbor to $u$ and a node in $B_{d/2}$. 
This corresponds to the second and third
events in case $l_u \leq \log^k n$ and are already bounded by $q_2$ and $q_3$.  
Formally, this event can be written as follows:
\begin{align*}
\begin{split}
  &P\left(l(u, V^{\log n}_{\Delta}) \cap l(w,B_{d/2}) | l_u \geq \log^k n \cap  l(B_d,R_i)^{\bot}\cap G\right) \\
&= P\left(l(u, V^{\log n}_{\Delta}) | l_u \geq \log^k n \cap  l(B_d,R_i)^{\bot}\cap G\right) \\
&\hspace{0.3cm}\cdot 
  P\left(l(w,B_{d/2}) | l(u, V^{\log n}_{\Delta}) \cap l_u \geq \log^k n \cap  l(B_d,R_i)^{\bot}\cap G\right) \\
&= q_2q_3 
= \Omega\left(\frac{\log d}{\log n}\log^{k} n \log^{(2-\alpha)} n \right) 
\end{split}
\end{align*} 
So, we can half the distance in one step with probability 
$\Omega\left(\frac{\log d}{\log n}\log^{k} n \log^{(2-\alpha)} n \right)$.
The sequence $X_i$ is decreasing and $q_1q_4 \leq 1$, so in case $l_u \geq \log^k n$,
it holds that
\begin{align*}
\begin{split}
&P(X_{i+2} \leq \frac{d}{2} | R_i \cap X_i = d) 
= \Omega\left(\frac{\log d \log^{k(3-\alpha)} n}{\log^\alpha n} \log^r n \right)
\end{split}
\end{align*}
as well. This completes the proof.
\end{proof}
Based on Lemma \ref{lem:xi} and Lemma \ref{lem:step}, $\E(R^{NoN}_1(s,t))$
can be bounded.

\begin{lemma}
\label{lem:phase1non}
For a graph $G=(V, E) \in  \mathcal{D}(n, 1, C, S_{\alpha})$ with $C < e^{\log^{1/2} n}$, two nodes $s,t \in V$, $1/2 \leq r \leq 1$ and $0 \leq k \leq \alpha-2$,
the expected routing length of \emph{NextBestOnce-NoN} to reach a node within distance 
$e^{\log^r n}$ of $t$ is
\begin{align}
\label{eq:non1}
\E(R^{NoN}_1(s,t)) = \calO\left(\log^{\alpha-r-k(3-\alpha)} n \log \log n \right).
\end{align} 
\end{lemma}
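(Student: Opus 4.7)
The plan is to apply Lemma \ref{lem:xi} to the random process $X_i$ introduced just before Lemma \ref{lem:xi}, using the halving probability established in Lemma \ref{lem:step}. The sequence $X_i$ is non-negative and integer-valued, and by definition phase~1 terminates as soon as $X_i \leq e^{\log^r n}$, so the natural choice is $\lambda = e^{\log^r n}$. Bounding the expected length of phase~1 then reduces to bounding the expected number of steps of the random process until it falls below $\lambda$.

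Next, I would match the hypothesis of Lemma \ref{lem:xi} with Lemma \ref{lem:step}. Rewriting Lemma \ref{lem:step} gives
\[
P(X_{i+2} \leq d/2 \mid R_i, X_i = d) = \Omega\!\left(\frac{\log d}{\log^{\alpha - r - k(3-\alpha)} n}\right),
\]
so one should take $\rho = \log^{\alpha - r - k(3-\alpha)} n$ in Lemma \ref{lem:xi}. One small technicality is that Lemma \ref{lem:step} conditions on $R_i$ (the nodes on the path so far and their neighborhoods), whereas Lemma \ref{lem:xi} asks for conditioning only on the history $X_0,\dots,X_i$. Since $R_i$ is a refinement of this history and the bound in Lemma \ref{lem:step} holds uniformly over the realizations of $R_i$ compatible with $X_i = d$, the tower law lets me transfer the same $\Omega(\log d/\rho)$ lower bound to the coarser conditioning required by Lemma \ref{lem:xi}.

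Having verified the hypothesis, I would conclude by invoking Lemma \ref{lem:xi}: the expected number of steps of phase~1 is $\calO(\rho \log\log X_0)$. Since $X_0$ is at most the diameter of the ID space (so $X_0 = \calO(n)$ and $\log\log X_0 = \calO(\log\log n)$), this gives exactly
\[
\E(R^{NoN}_1(s,t)) = \calO\!\left(\log^{\alpha - r - k(3-\alpha)} n \,\log\log n\right),
\]
as claimed in \eqref{eq:non1}. The main obstacle has already been dispatched in Lemma \ref{lem:step}; the present lemma is essentially a bookkeeping step that plugs that estimate into the abstract framework of Lemma \ref{lem:xi}, with the only genuine care needed being the conditioning reconciliation above.
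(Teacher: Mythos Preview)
Your proposal is correct and follows essentially the same approach as the paper: rewrite the bound from Lemma~\ref{lem:step} as $\Omega(\log d/\rho)$ with $\rho=\log^{\alpha-r-k(3-\alpha)} n$, then invoke Lemma~\ref{lem:xi} with $\lambda=e^{\log^r n}$ and $X_0=\calO(n)$ to obtain \eqref{eq:non1}. Your remark on reconciling the conditioning on $R_i$ versus the history $X_0,\dots,X_i$ is a welcome piece of care that the paper leaves implicit.
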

\begin{proof}
By Lemma \ref{lem:step} the probability to half the distance during the next two steps is given by
\begin{align*}
P(X_{i+2} \leq \frac{d}{2} | X_1,X_2,\ldots, X_i = d) = \Omega\left(\frac{\log d\cdot\log^{r+k(3-\alpha)} n }{\log^\alpha n}\right)
= \Omega\left(\frac{\log d}{\log^{\alpha-r-k(3-\alpha)} n}\right)
\end{align*}  
as long as $d > e^{\log^r n}$ and $|R_i|  < \sqrt{d}/2$. The later holds with probability at least
$1- \frac{1}{n}$, as can be seen from the proof for the upper bound of \emph{NextBestOnce} (see
\cite{roos11analysis} or for a similar argumentation \cite{fraigniaud09effect}, Theorem 2.4). It is
shown that \emph{NextBestOnce} needs at most $\calO\left(\log^3 n\right)$ steps with probability $\Omega\left( 1 - 1/n \right)$.
Since we assume the maximal degree to be bounded logarithmically, $|R_i| \leq K\log^4 n$ for some
constant $K$ follows.
Hence, Lemma \ref{lem:xi} with $\rho=\log^{\alpha-r-k(3-\alpha)} n$ and $\lambda=e^{\log^r b}$ can be applied to obtain
Eq. \ref{eq:non1}:
\begin{align*}
&\E(R^{NoN}_1(s,t)) \\
&= P(|R_i| \leq K\log^4 n) \E(R^{NoN}_1(s,t)||R_i| \leq K \log^4 n) \\
&+ (1-P(|R_i| \leq K\log^4 n))\E(R^{NoN}_1(s,t)| |R_i| > K \log^4 n)\\
&= \Omega\left( 1 - \frac{1}{n} \right) \calO\left( \log^{\alpha-r-k(3-\alpha)} n \log \log n \right) 
+ \calO\left( \frac{1}{n} \right)\calO\left( n \right) \\
&=  \calO\left(\log^{\alpha-r-k(3-\alpha)} n \log \log n  \right)
\end{align*}
In the second last step holds since $e^{\log^r n} > C$, so the distance is guaranteed to decrease
in each step, and thus maximally $n/2 - e^{\log^r n}$ are needed to complete the first phase.
\end{proof}
Theorem \ref{thm:non} can now be shown solving a two-dimensional extremal value problem.
\begin{proof}
It follows from Lemma \ref{lem:phase1non} and \ref{lem:phase2non} that for all $(k,r)\in [0,\alpha-2]\times [1/2,1]$
\begin{align*}
\E(R^{NoN}(s,t)) = 
\calO\left(\log^{\alpha-r - k(3-\alpha)} n \log \log n + \log^{r(\alpha-1)} n \log \log n + C^3 \log n\right)
\end{align*}
We need to find $(k_{min}, r_{min})$ such that
\begin{align}
f(k,r) = \log^{\alpha-r - k(3-\alpha)} n  + \log^{r(\alpha-1)} n
\end{align}
is minimized.
Computing the gradient of f gives:
\begin{align*}
\begin{split}
&Df 
 = \begin{pmatrix} -(3-\alpha)\log \log n(\log n)^{\alpha-r_{min}-k_{min}(3-\alpha)} \\
-\log \log n(\log n)^{\alpha-r_{min}-k_{min}(3-\alpha)} + (\alpha-1)\log \log n(\log n)^{r_{min}(\alpha-1)} \end{pmatrix} 
  \neq \begin{pmatrix} 0 \\ 0 \end{pmatrix}
\end{split}
\end{align*}
So the f takes its minimum on the border of the \mbox{$[0, \alpha-2]x[1/2,1]$}, e.g. if either $k_{min}=0$, $k_{min}=\alpha-2$, $r_{min}=0.5$ or $r_{min}=1$. 
When $k_{min}=0$, a node of degree at least one needs to be contacted first. This leads essentially to the same scenario used to obtain the bound for \emph{NextBestOnce}, and cannot have an improved complexity.
The same goes for the case $r_{min}=1$, because $e^{\log^1 n} = n$, so only the second phase, for which the complexity is bounded by that of \emph{NextBestOnce} is considered. As for $r_{min}=0.5$, observe the exponent of the first summand of f.
\begin{align*}
\alpha - 0.5 - k\cdot (3-\alpha) \geq \alpha - 0.5 - (\alpha-2)\cdot (3-\alpha) > \alpha-1
\end{align*}
The last step uses that $2 < \alpha < 3$.
So, an improved bound with regard to \emph{NextBestOnce} can only obtained for $k_{min}=\alpha-2$.
We determine $r_{min}$ by minimizing
$$g(r) = (\log n)^{\alpha - (\alpha-2)(3-\alpha)-r} + (\log n)^{r(\alpha-1)}$$
The first derivative of g is
\begin{align*}
\begin{split}
g'(r) &= -\log \log n (\log n)^{\alpha - (\alpha-2)(3-\alpha)-r} 
 + (\alpha-1)\log \log n (\log n)^{r(\alpha-1)}
\end{split}
\end{align*}
Setting $g'(r_{min}) = 0$ we get that 
\begin{align*}
(\alpha-1)\log \log n (\log n)^{r_{min}(\alpha-1)} 
&= \log \log n (\log n)^{\alpha - (\alpha-2)(3-\alpha)-r_{min}} \\
 (\log n)^{r_{min}(\alpha-1)+r_{min}} &= \frac{1}{\alpha-1}(\log n)^{\alpha - (\alpha-2)(3-\alpha)} \\
\left((\log n)^{r_{min}}\right)^{\alpha} &= \frac{1}{(\alpha-1)}(\log n)^{\alpha - (\alpha-2)(3-\alpha)} \\
(\log n)^{r_{min}} &= \frac{1}{(\alpha-1)^{1/\alpha}}(\log n)^{1-\frac{(\alpha-2)(3-\alpha)}{\alpha}} \\
\end{align*}
Finally, we get
\begin{align}
\label{eq:rmin}
\begin{split}
r_{min} &= \frac{\log \frac{1}{(\alpha-1)^{1/\alpha}} + \left(1-\frac{(\alpha-2)(3-\alpha)}{\alpha}\right) \log \log n}{\log \log n} \\
&= \frac{\log \frac{1}{(\alpha-1)^{1/\alpha}}}{\log \log n} + \left( 1-\frac{(\alpha-2)(3-\alpha)}{\alpha} \right)
\end{split}
\end{align}
This is indeed a minimum since
\begin{align*}
\begin{split}
&g''(r_{min}) = 
(\log \log n)^2 \left( (\log n)^{\alpha - (\alpha-2)(3-\alpha)-r_{min}} + (\alpha-1)^2 (\log n)^{r_{min}(\alpha-1)} \right) 
 > 0
\end{split}
\end{align*}
Consider that for $a = \log \frac{1}{(\alpha-1)^{1/\alpha}}$
\begin{align*}
(\log n)^{\frac{a}{\log n}} = 2^{\frac{a \log n}{\log n}} = 2^{a}.
\end{align*}
By this, the first summand in Eq. \ref{eq:rmin} does not contribute to the asymptotic
complexity, allowing us to use only the second summand
\begin{align*}
r^*_{min} = 1-\frac{(\alpha-2)(3-\alpha)}{\alpha}
\end{align*}
for the routing bound in Theorem \ref{thm:non}.
The upper bound on \emph{NextBestOnce-NoN} is then obtained as 
\begin{align*}
\begin{split}
&\calO\left(f(k_{min},r*_{min}) \log \log n + C^3 \log n \right) 
= \calO\left( \log^{\delta(\alpha)(\alpha-1)} n \log \log n + C^3 \log n \right)
\end{split}
\end{align*}
for 
$\delta(\alpha)=1-\frac{(\alpha-2)(3-\alpha)}{\alpha}$. 
This completes
the remaining steps in the proof of Theorem \ref{thm:non}.
\end{proof}
\section{Conclusion and Future Work}
\label{sec:conclusion}


We have provided an analysis of \emph{NextBestOnce},
an alternative routing algorithm based solely on
information about direct neighbors with guaranteed convergence
in any embedded graph.
In the context of a model for heuristically embedded social graphs, the expected routing
length of \emph{NextBestOnce} can be bound in terms
of the number $n$ of participants, the exponent $\alpha$
of the scale-free degree distribution of the social graph,
and a parameter $C$ measuring the accuracy of the embedding:
\emph{NextBestOnce}'s
expected routing length satisfies the upper and lower bounds
$\calO\left(\log^{\alpha-1} n \log \log n + C^3 \log n\right)$ and $\Omega\left(\log^{\alpha-1} n
+C\right)$.
By this, \emph{NextBestOnce} achieves polylog performance as long as $C$
is bound polylog.
This result complements our earlier work, which shows
that currently deployed algorithms do not achieve polylog performance \cite{roos13contribution},
even in case of constant $C$.
Furthermore, we have shown that using information about the two-hop neighborhood
indeed achieves an asymptotically decreased routing length for sufficiently
accurate embeddings: The extended
algorithm \emph{NextBestOnce-NoN} needs 
$\calO\left(\log^{\delta(\alpha)(\alpha-1)} n \log \log n + C^3 \log n\right)$
hops for $\delta(\alpha) < 1$.


The price of the increased performance are additional local computation, storage,
and maintenance costs due to the increased number of identifiers 
considered for routing decisions.
The expected number of two-hop neighbors is given by 
$E(S^2_{\alpha}) = \Theta\left(\mu^{3-\alpha}\right)$,
where $\mu$ is the maximal degree, whereas the expected
number of neighbors $E(S_{\alpha})$ is bound by a constant.
In other words, the costs per hop are constant when using only
direct neighborhood information, but increase with
the maximal degree when considering the two-hop
neighborhood.
Note that a logarithmic maximal degree is sufficient for 
the proof of Theorem \ref{thm:non}. 
Assuming a logarithmic maximal degree, the expected costs are polylog in the
number of participants, as is common in other structured overlays such as DHTs.
With that in mind, the additional costs seem a reasonable price for the
significantly shorter routes \emph{NextBestOnce-NoN} offers, especially
when considering that routing takes often hundreds of hops in
current Darknet implementations.
 

Since our aim was to show the superiority of \emph{NextBestOnce-NoN}
in comparison to \emph{NextBestOnce}, we did not provide
a lower bound on the performance of \emph{NextBestOnce-NoN}.
NoN routing has been shown to be optimal in a similar context \cite{Manku04NoN},
in as far as that the expected routing length is asymptotically equal
to the diameter of the graph. It remains to be seen if the result
holds in case of scale-free degree distributions as well.
In addition, we plan to analyze the dependence of routing length and
the accuracy of the embedding in more detail, aiming to close
the gap between the linear lower and the at least cubic upper bound.

\bibliographystyle{plain}
\bibliography{tech}

\begin{thebibliography}{10}

\bibitem{buchegger09peerson}
Sonja Buchegger, Doris Schi{\"{o}}berg, Le~Hung Vu, and Anwitaman Datta.
\newblock {PeerSoN: P2P Social Networking}.
\newblock In {\em Social Network Systems}, 2009.

\bibitem{Chaintreau08networks}
Augustin Chaintreau, Pierre Fraigniaud, and Emmanuelle Lebhar.
\newblock Networks become navigable as nodes move and forget.
\newblock In {\em Proceedings of the 35th international colloquium on Automata,
  Languages and Programming, ICALP '08}, 2008.

\bibitem{ClarkeSTV10}
Ian Clarke, Oskar Sandberg, Matthew Toseland, and Vilhelm Verendel.
\newblock Private communication through a network of trusted connections: The
  dark freenet.
\newblock http://freenetproject.org/papers.html, 2010.

\bibitem{ClarkeSWH00}
Ian Clarke, Oskar Sandberg, Brandon Wiley, and Theodore~W. Hong.
\newblock Freenet: A distributed anonymous information storage and retrieval
  system.
\newblock In {\em International Workshop on Design Issues in Anonymity and
  Unobservability}, 2000.

\bibitem{Coppersmith02diameter}
Don Coppersmith, David Gamarnik, and Maxim Sviridenko.
\newblock The diameter of a long-range percolation graph.
\newblock {\em Random Struct. Algorithms}, 21(1), 2002.

\bibitem{cutillo09privacy}
Leucio-Antonio Cutillo, Refik Molva, and Thorsten Strufe.
\newblock {Privacy Preserving Social Networking Through Decentralization}.
\newblock In {\em {6th International Conference on Wireless On-demand Network
  Systems and Services (WONS)}}, pages 145 -- 152, 2009.

\bibitem{CvetkovskiCrovella09}
Andrej Cvetkovski and Mark Crovella.
\newblock Hyperbolic embedding and routing for dynamic graphs.
\newblock In {\em Proceedings of the 28th IEEE International Conference on
  Computer Communications, INFOCOM '09}, 2009.

\bibitem{DellAmico07}
Matteo Dell'Amico.
\newblock Mapping small worlds.
\newblock In {\em Proceedings of the 7th International Conference on
  Peer-to-Peer Computing, P2P '07}, 2007.

\bibitem{EppsteinGoodrich11-SuccinctHyperEmbed}
David Eppstein and Michael~T. Goodrich.
\newblock Succinct greedy geometric routing using hyperbolic geometry.
\newblock {\em IEEE Trans. Computers}, 60(11):1571--1580, 2011.

\bibitem{EvansGrothoff11-R5N}
Nathan~S. Evans and Christian Grothoff.
\newblock {R5N}: Randomized recursive routing for restricted-route networks.
\newblock In {\em Proceedings of the 5th International Conference on Network
  and System Security, NSS '11}, 2011.

\bibitem{FluryEtAl09}
Roland Flury, Sriram~V. Pemmaraju, and Roger Wattenhofer.
\newblock Greedy routing with bounded stretch.
\newblock In {\em Proceedings of the 28th IEEE International Conference on
  Computer Communications, INFOCOM '09}, 2009.

\bibitem{fraigniaud09effect}
Pierre Fraigniaud and George Giakkoupis.
\newblock The effect of power-laws on the navigability of small worlds.
\newblock In {\em Proceedings of the 23rd annual ACM symposium on Principles of
  distributed computing, PODC '09}, 2009.

\bibitem{Fraigniaud10searchability}
Pierre Fraigniaud and George Giakkoupis.
\newblock On the searchability of small-world networks with arbitrary
  underlying structure.
\newblock In {\em Proceedings of the 42nd ACM symposium on Theory of
  computing}, STOC '10, 2010.

\bibitem{Giakkoupis11optimal}
George Giakkoupis and Nicolas Schabanel.
\newblock Optimal path search in small worlds: dimension matters.
\newblock In {\em Proceedings of the 43rd Symposium on Theory of Computing,
  STOC '11}, 2011.

\bibitem{HerzenEtAl11}
Julien Herzen, C{\'e}dric Westphal, and Patrick Thiran.
\newblock Scalable routing easy as pie: A practical isometric embedding
  protocol.
\newblock In {\em Proceedings of the 19th IEEE International Conference on
  Network Protocols, ICNP '11}, 2011.

\bibitem{IsdalEtAl10-OneSwarm}
Tomas Isdal, Michael Piatek, Arvind Krishnamurthy, and Thomas~E. Anderson.
\newblock Privacy-preserving p2p data sharing with oneswarm.
\newblock In {\em Proceedings of the ACM SIGCOMM 2010 conference, SIGCOMM '10},
  2010.

\bibitem{kleinberg00small}
Jon Kleinberg.
\newblock The small-world phenomenon: An algorithmic perspective.
\newblock In {\em Proceedings of the 32nd Symposium on Theory of Computing,
  STOC '00}, 2000.

\bibitem{Kleinberg07}
Robert Kleinberg.
\newblock Geographic routing using hyperbolic space.
\newblock In {\em Proceedings of the 26th IEEE International Conference on
  Computer Communications, INFOCOM '07}, 2007.

\bibitem{Lebhar04almost}
Emmanuelle Lebhar and Nicolas Schabanel.
\newblock Almost optimal decentralized routing in long-range contact networks.
\newblock In {\em Proceedings of the 30th international colloquium on Automata,
  Languages and Programming, ICALP '04}, 2004.

\bibitem{Manku04NoN}
Gurmeet~Singh Manku, Moni Naor, and Udi Wieder.
\newblock Know thy neighbor's neighbor: the power of lookahead in randomized
  p2p networks.
\newblock In {\em Proceedings of the 36th annual ACM symposium on Theory of
  computing, STOC '04}, 2004.

\bibitem{Martel03thecomplexity}
C.~Martel and V.~Nguyen.
\newblock The complexity of message delivery in kleinberg’s small-world
  model.
\newblock Technical report, UC Davis Department of Computer Science, 2003.

\bibitem{Martel04analyzing}
Chip Martel and Van Nguyen.
\newblock Analyzing kleinberg's (and other) small-world models.
\newblock In {\em Proceedings of the 33rd annual ACM symposium on Principles of
  distributed computing, PODC '04}, 2004.

\bibitem{Maymounkov06}
Petar Maymounkov.
\newblock Greedy embeddings, trees, and euclidean vs. lobachevsky geometry.
\newblock
  \url{https://www.pdos.lcs.mit.edu/~petar/papers/maymounkov-greedy-prelim.pdf},
  2006.

\bibitem{MittalEtAl12-XVine}
Prateek Mittal, Matthew Caesar, and Nikita Borisov.
\newblock X-vine: Secure and pseudonymous routing using social networks.
\newblock In {\em Proceedings of the 19th Annual Network \& Distributed System
  Security Symposium, NDSS '12}, 2012.

\bibitem{PopescuEtAl06-Turtle}
Bogdan~C. Popescu, Bruno Crispo, and Andrew~S. Tanenbaum.
\newblock Safe and private data sharing with turtle: Friends team-up and beat
  the system.
\newblock In {\em Proceedings of the 12th International Workshop Security
  Protocols}. Springer, 2006.

\bibitem{roos11analysis}
Stefanie Roos.
\newblock Analysis of routing in sparse small-world topologies.
\newblock Diplomarbeit, TU Darmstadt, 2011.

\bibitem{roos12provable}
Stefanie Roos and Thorsten Strufe.
\newblock Provable polylog routing for darknets.
\newblock In {\em Proceedings of the 4th Workshop on Hot Topics in Peer-to-peer
  Computing and Online Social Networking, HotPOST '12}, 2012.

\bibitem{roos13contribution}
Stefanie Roos and Thorsten Strufe.
\newblock A contribution to darknet routing.
\newblock In {\em Proceedings of the 32nd IEEE International Conference on
  Computer Communications, INFOCOM '13}, 2013.

\bibitem{Sandberg06}
Oskar Sandberg.
\newblock Distributed routing in small-world networks.
\newblock In {\em Proceedings of the 8th Workshop on Algorithm Engineering and
  Experiments, ALENEX '06}, 2006.

\bibitem{SchillerEtAl11-LCM}
Benjamin Schiller, Stefanie Roos, Andreas H\"{o}fer, and Thorsten Strufe.
\newblock Attack resistant network embeddings for darknets.
\newblock In {\em Proceedings of the 30th Symposium on Reliable Distributed
  Systems Workshops, SRDSW '11}, 2011.

\bibitem{VassermanEtAl09-MCON}
Eugene Vasserman, Rob Jansen, James Tyra, Nicholas Hopper, and Yongdae Kim.
\newblock Membership-concealing overlay networks.
\newblock In {\em Proceedings of the 17th ACM conference on Computer and
  communications security, CCS '09}, 2009.

\bibitem{WestphalPei09}
C{\'e}dric Westphal and Guanhong Pei.
\newblock Scalable routing via greedy embedding.
\newblock In {\em Proceedings of the 28th IEEE International Conference on
  Computer Communications, INFOCOM '09}, 2009.

\end{thebibliography}

\end{document}